\newif\ifFULL
\providecommand{\algorithmname}{Algorithm}
  \newtheorem{theorem}{Theorem}
  \numberwithin{theorem}{section}
  \numberwithin{equation}{section}
  \theoremstyle{definition}
  \newtheorem{defn}[theorem]{\protect\definitionname}
 \theoremstyle{definition}
  \newtheorem{example}[theorem]{Example}
  \theoremstyle{remark}
  \newtheorem{claim}[theorem]{Claim}
  \newtheorem{observation}{Observation}
\theoremstyle{plain}
\newtheorem{thm}[theorem]{\protect\theoremname}
  \theoremstyle{plain}
  \newtheorem{lem}[theorem]{\protect\lemmaname}
  \theoremstyle{plain}
  \newtheorem{prop}[theorem]{\protect\propositionname}
\providecommand{\E}{\mathrm{E}}
\newcommand{\myvec}{\mathbf}
\newcommand{\x}{\myvec{x}}
\newcommand{\X}{\myvec{X}}
\newcommand{\Y}{\myvec{Y}}
\newcommand{\ignore}[1]{}
\definecolor{gray-comment}{gray}{0.5}
\theoremstyle{plain}
\newtheorem*{rep@theorem}{\rep@title}
\newcommand{\newreptheorem}[2]{%
\newenvironment{rep#1}[1]{%
 \def\rep@title{#2 \ref{##1}}%
 \begin{rep@theorem}}%
 {\end{rep@theorem}}}
\DeclareMathOperator{\supp}{supp}
  \providecommand{\definitionname}{Definition}
  \providecommand{\lemmaname}{Lemma}
  \providecommand{\propositionname}{Proposition}
\providecommand{\theoremname}{Theorem}
\newcounter{note}[section]
\title{Combinatorial Prophet Inequalities}
\author{
Aviad Rubinstein\thanks{(aviad@eecs.berkeley.edu) Electrical Engineering and Computer Sciences, UC Berkeley, Berkeley, CA.
This work was supported by Microsoft Research PhD Fellowship, NSF grant CCF1408635, and Templeton Foundation grant 3966.}
\and Sahil Singla\thanks{(ssingla@cmu.edu) Computer Science Department, Carnegie Mellon University, Pittsburgh, PA. This work was supported by CMU Presidential Fellowship and NSF awards CCF-1319811,
   CCF-1536002, and CCF-1617790}
}
\date{\today}
\begin{document}
\maketitle

\begin{abstract}{
\noindent We  introduce a novel framework of Prophet Inequalities for combinatorial valuation functions.
For a (non-monotone) submodular objective function over an arbitrary matroid feasibility constraint, we  give an $O(1)$-competitive algorithm.
For a monotone subadditive objective function over an arbitrary downward-closed feasibility constraint, we give an $O(\log n \log^2 r)$-competitive algorithm (where $r$ is the cardinality of the largest feasible subset).
%


Inspired by the proof of our subadditive prophet inequality, we also obtain an  $O(\log n \cdot \allowbreak  \log^2 r)$-competitive algorithm for the Secretary Problem with a monotone subadditive objective function subject to an arbitrary downward-closed feasibility constraint. 
Even for the special case of a cardinality feasibility constraint, our algorithm circumvents an $\Omega(\sqrt{n})$ lower bound 
by Bateni, Hajiaghayi, and Zadimoghaddam \cite{BHZ13-submodular-secretary_original} in a restricted query model.

En route to our submodular prophet inequality, we prove a technical result of independent interest:
we show a variant of the Correlation Gap Lemma \cite{CCPV07-correlation_gap, ADSY-OR12} for non-monotone submodular functions.
}\end{abstract}


\ifFULL
\else
\setcounter{page}{0}
\thispagestyle{empty}

\newpage
\fi
\section{Introduction}\label{sec:Intro}

The {\em Prophet Inequality} and {\em Secretary Problem} are classical problems in stopping theory. 
In both problems a decision maker must choose one of $n$ items arriving in an online fashion.
In the Prophet Inequality, each item is drawn independently from a known distribution, but the order of arrival is chosen adversarially. In the Secretary Problem, the decision maker has no prior information about the set of items to arrive (except their cardinality, $n$), 
but the items are guaranteed to arrive in a uniformly random order.

Historically, there are many parallels between the research of those two problems. The classic (single item) variants of both problems were resolved a long time ago: in 1963 Dynkin gave a tight $e$-competitive algorithm for the Secretary Problem \cite{Dynkin63}; a little over a decade later Krengel and Sucheston \cite{KS77-prophet} gave a tight $2$-competitive algorithm for the Prophet Inequality. 
Motivated in part by applications to mechanism design, multiple-choice variants of both problems have been widely studied for the past decade in the online algorithms community. Instead of one item, the decision maker is restricted to selecting a feasible subset of the items.
The seminal papers of \cite{HKP04-first_secretary, Kleinberg05-multiple_choice-secretary} introduced a secretary problem subject to a cardinality constraint (and \cite{Kleinberg05-multiple_choice-secretary} also obtained a $1-O(1/\sqrt{r})$-competitive algorithm). In 2007, Hajiaghayi et al. followed with a prophet inequality subject to cardinality constraint \cite{HKS07-prophet_and_online-MD}. 
In the same year, Babaioff et al. introduced the famous {\em matroid secretary problem} \cite{BIK07-secretary_original}; in 2012 Kleinberg and Weinberg introduced (and solved!) the analogous matroid prophet inequality. For general downward-closed constraints, $O(\log n \log r)$-competitive algorithms were recently obtained for both the problems \cite{Rub16-downward_closed}.

In all the works mentioned in the previous paragraph, the goal is to maximize the sum of selected items' values, i.e. an additive objective is optimized.
For the secretary problem, there has also been significant work on optimizing more general, combinatorial objective functions.
A line of great works \cite{BHZ13-submodular-secretary_original, FNS11-submodular_secretary-old, BUCM12-submodular_secretary-old, FZ15-submodular_secretary} on secretary problem with submodular valuations culminated with a general reduction by Feldman and Zenklusen \cite{FZ15-submodular_secretary} from any submodular function to additive (linear) valuations with only $O(1)$ loss.
Going beyond submodular is an important problem \cite{FI15-supermodular_secretary}, but for subadditive objective functions there is a daunting $\Omega(\sqrt n)$ lower bound on the competitive ratio for restricted value queries \cite{BHZ13-submodular-secretary_original}.

Surprisingly, this line of work on combinatorial secretary problems has seen no parallels in the world of prophet inequalities.
In this work we break the ice by introducing a new framework of combinatorial prophet inequalities.

\paragraph{Combinatorial Prophet Inequalities:}
Our main conceptual contribution is a generalization of the Prophet Inequality setting to combinatorial valuations.
Roughly, on each of $n$ days, the decision maker knows an independent prior distribution over $k$ potential items that could appear.\footnote{Note that some notion of independence assumption is necessary as  even for the single choice problem, if values are arbitrarily correlated then every online algorithm is $\Omega(n)$ competitive~\cite{HillKertz-Journal92}.}
She also has access to a combinatorial (in particular, submodular or monotone subadditive) function $f$ that describes the value of any subset of the $n \cdot k$ items (see Section~\ref{sec:combinatorial} for a formal definition and further discussion).
We obtain the following combinatorial prophet inequalities:

\begin{thm}[Submodular Prophet; informal]\label{thm:submodular-prophet}
There exists an efficient randomized $O(1)$-competitive algorithm  for (non-monotone) submodular prophet over any matroid.
 \end{thm}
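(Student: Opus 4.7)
The plan is to follow the standard LP-plus-OCRS template for prophet inequalities, adapted to non-monotone submodular objectives via the new correlation-gap lemma the abstract advertises. Let $p_{i,j}$ be the probability that item $j$ appears on day $i$ and let $\mathcal{M}$ denote the matroid constraint on the ground set $[n]\times[k]$. I would first pass to the ``ex-ante'' fractional relaxation over the downward-closed polytope
\[ Q \;=\; \bigl\{ y \in \mathbb{R}_{\geq 0}^{nk} \,:\, y_{i,j} \leq p_{i,j},\ y \in P(\mathcal{M})\bigr\}, \]
since the marginal selection vector of any online policy (and of the prophet itself) lies in $Q$. With $F$ the multilinear extension of $f$, the new non-monotone correlation-gap lemma implies $\E[\mathrm{OPT}] \leq O(1)\cdot F(y)$ for any $y$ realizing the prophet's marginals, so the fractional goal is to find $y^* \in Q$ with $F(y^*) = \Omega(\E[\mathrm{OPT}])$, achievable by running measured continuous greedy (Feldman--Naor--Schwartz) on $Q$.

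The second step is online rounding. Given $y^*$, each day I would declare the arriving item $(i,j)$ ``active'' with probability $y^*_{i,j}/p_{i,j}$ and feed the resulting stream into a $c$-balanced matroid OCRS (Feldman--Svensson--Zenklusen give $c=1/4$ in the prophet model). The OCRS irrevocably accepts or rejects each active element while preserving matroid independence and guaranteeing that each active element is accepted with probability at least $c$, regardless of other realizations. To control over-selection---a concern specifically for non-monotone $f$---I would compose the OCRS with an independent discarding step that keeps each accepted element with constant probability, as is standard in non-monotone submodular maximization.

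The final step chains three constant losses: an $O(1)$ approximation in the fractional problem, an $\Omega(1)$-balanced OCRS, and an $\Omega(1)$ correlation-gap factor from the OCRS-selected distribution back to $F(y^*)$. The main obstacle is the last link. In the monotone case, submodularity plus balancedness yields $\E[f(A)] \geq c\cdot F(y^*)$ immediately; in the non-monotone case this estimate can outright fail, because adversarial correlations within the OCRS-selected set (together with the freedom to over-include) can destroy value. Establishing a non-monotone analogue---roughly, that for any distribution whose marginals are dominated by $c\cdot y^*$ the expected $f$-value is at least a constant times $F(y^*)$---is precisely what the paper's new variant of the correlation-gap lemma must deliver, and is where I expect the bulk of the technical work to concentrate.
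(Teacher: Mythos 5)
Your high-level skeleton (ex-ante relaxation, matroid OCRS, correlation gap) matches the paper's, and you correctly flag the non-monotone correlation gap as the crux, but there are two substantive gaps in how the pieces fit together.

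First, the bound you assert for the benchmark, that the new correlation-gap lemma gives $\E[\mathrm{OPT}] \leq O(1)\cdot F(y)$ whenever $y$ realizes the prophet's marginals, is false as written: the paper's Example~\ref{ex:naive-corr-gap} (a two-vertex directed cut with marginals $(\epsilon, 1-\epsilon)$) shows that for non-monotone $f$ the ratio $f^{+}(\x)/F(\x)$ is unbounded, so no amount of technical work can rescue that inequality for the multilinear extension $F$ itself. The paper's actual correlation gap (Theorem~\ref{thm:correlation-gap_formal}) controls $f^{+}(\x)$ by $F_{\max}(\x)$, and the quantitatively usable form is $f^{+}(\x) \leq 200\cdot F(\x/2)$. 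This is why the whole argument runs on the scaled-down vector $\x/2$ rather than on $\x$, and why the OCRS is invoked at $\x/2$. Your ``independent discarding step'' after the OCRS is pointed in a similar direction, but it is applied in the wrong place: the factor-of-two shrinkage has to happen \emph{before} the benchmark comparison (to make $F$ dominate $f^{+}$), not merely as post-processing of the OCRS output.

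Second, and more fundamentally, the OCRS of Feldman, Svensson, and Zenklusen needs its input stream to be a \emph{product} distribution with marginals $\x/2$. Under your reduction (mark the realized item $(i,j)$ active with probability $y^*_{i,j}/p_{i,j}$), at most one item per day is ever active, so the within-day activations are perfectly anticorrelated rather than independent. This violates the hypothesis of the OCRS guarantee, and the selection-probability bound no longer applies. The paper's Section~\ref{sec:use-OCRS} handles exactly this: on each day it feeds the OCRS a random subset of $U_i$ coupled to the realization -- a singleton $\{(i,j)\}$ with probability $P^i_{\x/2}(\{(i,j)\})/x_{i,j}$, and otherwise a set of size $\neq 1$ drawn from the conditional product law -- so that the OCRS genuinely sees a stream distributed as $\x/2$; the algorithm ``follows through'' on the OCRS's decision only when a singleton containing the realized item was fed, and the $\x/2$ scaling is what ensures this happens with probability at least $1/2$ conditioned on $(i,j)$ being fed. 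This coupling, together with an application of Lemma~\ref{lem:BFNS} to the submodular function $g(\bar T) = f(S_{\mathrm{OCRS}} \setminus \bar T)$, is the part of the argument your proposal does not supply.

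A smaller point: the measured continuous-greedy step is unnecessary. The paper never optimizes over a polytope; it takes $\x$ to be the given vector of realization probabilities, uses $f^{+}(\x)$ directly as the upper bound on $\mathrm{OPT}$, and runs the OCRS at $\x/2$. Introducing $y^*$ both adds overhead and obscures why $\x/2$ (rather than any fractional optimizer) is the right point to feed the OCRS.
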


\begin{thm}[Monotone Subadditive Prophet; informal]\label{thm:subadditive-prophet}
There exists an $O(\log n \cdot \log^2 r)$-competitive algorithm for monotone subadditive prophet inequality subject to any downward-closed constraints family.
\end{thm}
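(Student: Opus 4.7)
The plan is to reduce the monotone subadditive prophet inequality, via a value-scale decomposition, to an additive downward-closed prophet inequality, and then to invoke the $O(\log n\log r)$-competitive algorithm of~\cite{Rub16-downward_closed}. The two sources of loss will compose as $O(\log r)$ (for the subadditive-to-additive reduction) times $O(\log n\log r)$ (for Rubinstein), matching the claimed $O(\log n\log^2 r)$.

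First, I would \emph{estimate} the prophet's expected value $V \approx \E[f(\mathrm{OPT})]$ up to a constant factor; this can be done either by polynomially many simulations from the known priors or by running logarithmically many parallel algorithms with geometrically spaced guesses, and the resulting loss is absorbed into the overall bound. Second, I would \emph{discretize by value scale}: for $\tau \in \{V/2^i : i = 0, 1, \ldots, O(\log r)\}$, consider items whose ``contribution'' to $f(\mathrm{OPT})$ is roughly $\tau$. Since $f(\mathrm{OPT}) \leq V$, at most $V/\tau$ items of $\mathrm{OPT}$ can sit at scale $\tau$, and by pigeonhole over the $O(\log r)$ scales at least one scale $\tau^\star$ must capture an $\Omega(V/\log r)$ fraction of $f(\mathrm{OPT})$; randomizing uniformly over scales costs a factor of $O(\log r)$.

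Third, having fixed the scale $\tau^\star$, I would treat the resulting problem as an additive prophet inequality whose per-item value is $\tau^\star$ times the indicator that the arriving item is ``at scale $\tau^\star$,'' subject to the original downward-closed constraint. Applying Rubinstein's $O(\log n\log r)$-competitive algorithm to these indicator values yields an online selection which, by subadditivity together with monotonicity of $f$, has expected $f$-value $\Omega(V/(\log n\log^2 r))$.

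The principal obstacle is Step~2: defining ``contribution at scale $\tau$'' for a subadditive $f$ in a way that (i)~is meaningful despite the absence of clean marginals, (ii)~can be decided online from the items seen so far, perhaps with the aid of fresh ``shadow'' realizations sampled from the known priors, and (iii)~aggregates via subadditivity so that the set we collect truly realizes $\Omega(V/\log r)$ value of $f$, not merely an additive lower bound. I expect this to require a balanced-thresholding scheme in the spirit of~\cite{Rub16-downward_closed}, where thresholds are calibrated using a simulated optimum drawn from the posterior, so that the items passing the threshold form a subadditive ``layer'' whose $f$-value is controlled by the additive count.
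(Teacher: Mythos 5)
Your high-level plan---reduce subadditive to XOS via Dobzinski ($O(\log r)$ loss), bucket by value scale ($O(\log r)$ loss), and then solve a $\{0,1\}$-indicator problem ($O(\log n)$ loss)---matches the paper's loss accounting and would indeed give $O(\log n\log^2 r)$. But the step you flag as ``the principal obstacle'' is exactly where the paper departs from a black-box invocation of~\cite{Rub16-downward_closed}, and your proposal does not fill it.

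The difficulty is this: after the Dobzinski reduction the objective is $\widehat f(S)=\max_{m\in M}\sum_{i\in S}X_i^m$, an XOS function of the random realizations. There is no per-item ``value at scale $\tau$'' determined online, because item $i$'s contribution to $\widehat f$ depends on which clause $m$ is the eventual maximizer, and the optimal $m$ depends on the \emph{entire} realization. In the additive downward-closed prophet of~\cite{Rub16-downward_closed}, each item carries a scalar value revealed on arrival; here each item carries a vector in $\{0,1\}^M$, and the scalar ``contribution'' is only pinned down at the end. You therefore cannot hand $\{0,1\}$-indicators to Rubinstein's additive algorithm and cite its guarantee. This contrasts with the secretary case (Theorem~\ref{thm:subadditive-secretary}), where the black-box reduction does go through: there the algorithm observes all feasible subsets of arrived items and their $f$-values, so it can guess the optimal Dobzinski weight $p_{T^\star}$ and encode the scale constraint into a new downward-closed family ${\cal F}'$. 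In the prophet case you only have priors, not hindsight, so this trick is unavailable.

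The paper's fix (Proposition~\ref{prop:XOS-prophet-01} and Section~\ref{sub:XOS-with-01}) is a new algorithm that \emph{adapts}, rather than invokes, the techniques of~\cite{Rub16-downward_closed}. The key new ingredient your proposal lacks is an online-maintained shrinking set $M'\subseteq M$ of ``still viable'' XOS clauses: when the algorithm commits to an item with realization $z_j$, it restricts $M'$ to clauses $m$ with $z_j^m=1$, which guarantees $V_{M'}({\cal F},X_W)=|W|$. Around this the paper builds a dynamic threshold $\tau$, a target probability $\pi=\Pr[V_{M'}({\cal F}_W,X)>\tau\mid\text{history}]$, a notion of ``good'' outcome (one that does not shrink $\pi$ by more than $n^2$), a ``hallucination'' device for rounds with no good realization, and a Ledoux-type concentration argument to seed $\tau$. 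That machinery is what makes the XOS-prophet step lose only $O(\log n)$. Your mention of ``balanced thresholding in the spirit of Rub16'' and ``shadow realizations'' is the right instinct, but it remains a placeholder: without the $M'$-restriction idea, a subadditive lower bound like $f(X_W)\geq\tau^\star\cdot|W|$ is simply not established.
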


\paragraph{Subadditive Secretary Problem:}
Building on the techniques of our subadditive prophet inequality, we go back to the secretary world and prove a (computationally inefficient) $O(\log n \cdot \log^2 r)$-competitive algorithm for the subadditive secretary problem subject to any downward-closed feasibility constraint. As noted earlier, this algorithm circumvents the impossibility result of Bateni et al. \cite{BHZ13-submodular-secretary_original} for efficient algorithms%
\footnote{In fact, for general downward-closed constraint, even with additive valuations one should not expect efficient algorithms with membership queries \cite{Rub16-downward_closed}.}.

\begin{thm}[Monotone Subadditive Secretary; informal]\label{thm:subadditive-secretary-informal}
There exists an $O(\log n \cdot \log ^2 r)$-competitive algorithm for monotone subadditive secretaries subject to any downward-closed constraints family.
\end{thm}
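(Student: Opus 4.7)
The plan is to reduce the monotone subadditive secretary problem to the monotone subadditive prophet inequality of Theorem~\ref{thm:subadditive-prophet} via the classical sample-and-learn strategy, exploiting the random arrival order to approximate the product prior that the prophet algorithm would require.

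First, I would set aside the initial $\lfloor n/2 \rfloor$ arrivals as a sample $S$, selecting nothing during this phase. Since the arrival order is uniformly random, $S$ is a uniformly random subset of $[n]$. By monotonicity and subadditivity, $\mathbb{E}[f(\mathrm{OPT}(S))]$ is within a constant factor of the offline value $f(\mathrm{OPT})$: the upper bound uses monotonicity, while the lower bound follows from the folklore argument $f(\mathrm{OPT}) \leq f(\mathrm{OPT} \cap S) + f(\mathrm{OPT} \setminus S)$ together with symmetry over the random split. The sample then serves two purposes: (i) it yields an estimate $v^\star$ of the offline optimum up to a constant, and (ii) it acts as an empirical surrogate for the per-day priors needed by the prophet algorithm.

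Next, I would guess a dyadic ``scale'' $i \in \{0, 1, \dots, O(\log n)\}$, restricting attention to items whose marginal contribution lies in $[v^\star/2^{i+1},\, v^\star/2^i]$. A pigeonhole over the $O(\log n)$ dyadic buckets guarantees that some scale captures an $\Omega(1/\log n)$ fraction of $f(\mathrm{OPT})$. For that scale, I would simulate the subadditive prophet algorithm of Theorem~\ref{thm:subadditive-prophet} on the remaining $n - |S|$ arrivals, treating the empirical distribution on $S$ as the product prior and using (possibly computationally expensive) oracle access to $f$ on arbitrary subsets---this is precisely what lets us bypass the $\Omega(\sqrt{n})$ query-model lower bound of \cite{BHZ13-submodular-secretary_original}. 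The prophet algorithm contributes $O(\log^2 r)$, and the scale-guessing plus sample-approximation together contribute the additional $O(\log n)$ factor.

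The main obstacle is that, unlike submodularity, subadditivity lacks clean marginal structure, so the empirical sample does not immediately mimic independent per-day priors. The technical crux will be to show that restricting to a single dyadic scale effectively collapses the problem into a near-homogeneous regime in which the items of that level contribute roughly equal values; in this regime, the random-order property combined with a standard concentration argument (the count of ``level-$i$'' items falling into $S$ concentrates around half of the total) lets us port the prophet-inequality analysis to the secretary setting at only a constant-factor additional loss, yielding the claimed $O(\log n \cdot \log^2 r)$ competitive ratio.
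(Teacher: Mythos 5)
Your proposal takes a genuinely different route from the paper's, and unfortunately it has gaps that are not merely technical loose ends but the entire substance of the argument.

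The central move you propose---treating the empirical distribution on a sampled half of the items as a product prior and then running the subadditive \emph{prophet} algorithm on the remaining arrivals---is not a valid reduction. In the secretary model the $n$ item values are fixed adversarially and the randomness is only in the arrival order, so conditioned on the sample $S$ the remaining items are drawn \emph{without replacement}: they are negatively correlated, not independent. The prophet algorithm's guarantee (and the machinery behind it in Section~\ref{sec:Prophet}, in particular the dynamic potential argument and the concentration bound of Lemma~\ref{lem:concentration}) relies crucially on independence of the per-day distributions. You flag this mismatch as ``the main obstacle'' and then assert it can be handled by ``a standard concentration argument,'' but no such standard argument exists for general combinatorial objectives and general downward-closed $\mathcal{F}$; this is precisely the part that would need to be a theorem, and it is left unproved. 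Additionally, your dyadic-scale bucketing is phrased in terms of an item's ``marginal contribution,'' which for subadditive (non-submodular) $f$ is not a well-defined per-item quantity---it depends on which other items are chosen. Finally, the factor accounting does not close: Theorem~\ref{thm:subadditive-prophet} is $O(\log n \cdot \log^2 r)$-competitive already, so invoking it as a black box after paying another $O(\log n)$ for scale-guessing would give $O(\log^2 n \cdot \log^2 r)$, not the claimed bound.

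The paper avoids all of this by never going through the prophet inequality at all. It applies Dobzinski's lemma (Lemma~\ref{lem:dobzinski}) to the offline optimum $T^\star$ to obtain a single price $p_{T^\star}$ with $f(S) \geq p_{T^\star}|S \cap T^\star|$ for all $S \subseteq T^\star$ and $OPT = O(p_{T^\star}|T^\star|\log r)$. It then defines a \emph{new} downward-closed constraint $\mathcal{F}'$ (sets $T$ feasible in $\mathcal{F}$ all of whose subsets $S$ satisfy $f(S) \geq p_{T^\star}|S|$) and runs the additive $\{0,1\}$-valued downward-closed secretary algorithm of \cite{Rub16-downward_closed} on $\mathcal{F}'$ with all values $1$, paying $O(\log n \log r)$ there and one extra $O(\log r)$ for guessing $p_{T^\star}$. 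This sidesteps the independence issue entirely, which is why the secretary case ends up being the \emph{easier} of the two in this paper, not a corollary of the harder prophet result.
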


\paragraph{Non-monotone correlation gap:}
En route to Theorem \ref{thm:submodular-prophet}, we prove a technical contribution that is of independent interest: a constant {\em correlation gap} for non-monotone submodular functions. 
For a monotone submodular function $f$, \cite{CCPV07-correlation_gap} showed that the expected value of $f$ over any distribution of subsets is at most a constant factor larger than the expectation over subsets drawn from the product distribution with the same marginals.
This bound on the correlation gap has been very useful in the past decade with applications in optimization \cite{CVZ14-CRS}, mechanism design \cite{ADSY-OR12, Yan-SODA11, BCK12-correlation_gap_risk-averse, BH16-correlatin_gap-mechanism_design}, influence in social networks \cite{RSS15-knapsack_adaptive_seeding, BPRS16-locally_adaptive_seeding}, and recommendation systems \cite{KSS13-recommendation_system-correlation_gap}.

It turns out (see Example \ref{ex:naive-corr-gap}) that when $f$ is non-monotone, the correlation gap is unbounded, even for $n=2$!
Instead, we prove a correlation gap for a related function:
\[ f_{\max}(S) \triangleq \max_{T \subseteq S} f(T).\]
(Note that $f_{\max}$ is monotone, but may not be submodular.)

\begin{thm}[Non-monotone correlation gap; informal]\label{thm:corrgap}
For any (non-monotone) submodular function $f$, the function $f_{\max}$ has a correlation gap of $O(1)$.
\end{thm}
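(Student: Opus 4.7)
Let $T^{*}(S) \in \arg\max_{T \subseteq S} f(T)$, so that $\E_{S \sim \mathcal{D}}[f_{\max}(S)] = \E_{S}[f(T^{*}(S))]$, and let $\mathcal{D}^{*}$ denote the law of $T^{*}(\mathcal{D})$, with marginals $\mathbf{y} \in [0,1]^n$ satisfying $\mathbf{y} \leq \mathbf{x}$ coordinatewise. Writing $G$ for the multilinear extension of $f_{\max}$, monotonicity of $f_{\max}$ gives $G(\mathbf{x}) \geq G(\mathbf{y})$, so it suffices to prove $G(\mathbf{y}) \geq \Omega(1) \cdot \E_{T \sim \mathcal{D}^{*}}[f(T)]$. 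The crucial structural feature is that every $T \in \supp(\mathcal{D}^{*})$ is \emph{downward-maximal}: $f(T) = \max_{T' \subseteq T} f(T')$.

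The first workhorse is a two-step sampling bound. For $R \sim \pi(\mathbf{y})$ and any $T' \subseteq R$ we have $f_{\max}(R) \geq f(T')$; if $T'$ is obtained by keeping each element of $R$ independently with probability $1/2$, then marginally $T' \sim \pi(\mathbf{y}/2)$, so
\[ G(\mathbf{y}) \;\geq\; \E_{R,T'}[f(T')] \;=\; F(\mathbf{y}/2), \]
where $F$ is the multilinear extension of $f$. Since $f$ is submodular, $\partial^{2}F/\partial x_i\partial x_j \leq 0$ for $i \neq j$, so $F$ is concave along any coordinate-wise non-negative direction; together with $F(\mathbf{0}) = f(\emptyset) \geq 0$ this yields $F(p\mathbf{y}) \geq p\,F(\mathbf{y})$ for $p \in [0,1]$. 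This reduces our task to showing $F(\mathbf{y}/2) \geq \Omega(1)\cdot \E_{T \sim \mathcal{D}^{*}}[f(T)]$.

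To establish this last inequality I would invoke Feige's random-subset theorem applied pointwise on the support of $\mathcal{D}^{*}$: for each downward-maximal $T$, viewing $f$ restricted to $2^{T}$ as a non-negative submodular function whose maximum is attained at $T$, a uniformly random $R' \subseteq T$ satisfies $\E[f(R')] \geq (1/4) f(T)$. Averaging over $T \sim \mathcal{D}^{*}$ produces a \emph{correlated} distribution on subsets of $[n]$ with marginals $\mathbf{y}/2$ whose $f$-expectation is at least $(1/4)\E_{\mathcal{D}^{*}}[f]$. The main obstacle is to transfer this from the correlated distribution to the product distribution $\pi(\mathbf{y}/2)$ underlying $F(\mathbf{y}/2)$: intuitively, we need a (weak) non-monotone correlation-gap statement that holds when the correlated distribution is supported on downward-maximal sets. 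I would close this gap by a coupling/exchange argument that replaces coordinates one at a time, using the downward-maximal property of each $T$ in the support (which forbids the kind of ``destructive overlap'' behavior that breaks correlation gap for general non-monotone submodular functions) to absorb the cost of independizing each coordinate into a constant-factor loss. Chaining the three inequalities $G(\mathbf{x}) \geq G(\mathbf{y}) \geq F(\mathbf{y}/2) \geq \Omega(1)\cdot \E_{\mathcal{D}^{*}}[f]$ then proves the claimed $O(1)$ correlation gap for $f_{\max}$.
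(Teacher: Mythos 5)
Your reductions down to ``show $F(\mathbf{y}/2)\geq\Omega(1)\cdot\E_{\mathcal{R}}[f]$'' are correct, and the FMV/Feige step that gives $\E_{R'\subseteq_{1/2}T}[f(R')]\geq\tfrac14 f(T)$ for downward-maximal $T$ is applied correctly (it is exactly Lemma~\ref{lem:FMV-double} with $p=1/2$, using that $\max_{T'\subseteq T}f(T')=f(T)$). But the remaining step is precisely the content of the theorem, and your sketch for it does not go through.

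Concretely: after the Feige step you hold a \emph{correlated} distribution $\mathcal{R}=\sum_T\alpha_T\,\pi(\mathbf{1}_T/2)$ with marginals $\mathbf{y}/2$, and you need $F(\mathbf{y}/2)\geq\Omega(1)\cdot\E_{\mathcal{R}}[f]$. Since $\E_{\mathcal{R}}[f]\leq f^+(\mathbf{y}/2)$, that is a correlation-gap statement for the non-monotone $f$ at the point $\mathbf{y}/2$ --- exactly the kind of statement Example~\ref{ex:naive-corr-gap} shows is false in general. You say you will exploit downward-maximality, but the sets in $\mathrm{supp}(\mathcal{R})$ are random $1/2$-subsets of downward-maximal sets, and such subsets are \emph{not} downward-maximal themselves; the structural property you want to lean on is lost precisely when you sub-sample. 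Moreover, ``replace coordinates one at a time with a constant-factor loss per coordinate'' yields a $(1-\epsilon)^n$-type bound, not an $O(1)$ one, so even the shape of the claimed argument does not close the gap. In short, the proposal reduces the theorem to a statement of essentially the same difficulty and leaves that statement unproved.

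The paper takes a different route. It introduces the relaxation
\[
f^*_{1/2}(\x)\;\triangleq\;\min_{S\subseteq[n]}\,\E_{T\sim\mathbf{1}_S/2}\Bigl[f(T)+\sum_{i\in[n]\setminus S}f_T(i)\,x_i\Bigr],
\]
and proves two inequalities: (i) $f^+(\x)\leq 4\,f^*_{1/2}(\x)$, via Claim~\ref{cla:aux} ($\E_{T_{1/2}\sim\mathbf{1}_T/2}[f((S\setminus T)\cup T_{1/2})]\geq\tfrac14 f(S)$) and submodularity of the marginal; and (ii) $f^*_{1/2}(\x)\leq 50\,F(\x/2)$, via an exponential-clock/differential-inequality argument analogous to Lemma~5 of~\cite{CCPV07-correlation_gap} but carried out at scale $\mathbf{1}_S/2$ throughout. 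Chaining these with the trivial $F(\x/2)\leq F_{\max}(\x)$ gives the $O(1)$ gap. If you want to salvage your plan, the piece you are missing corresponds roughly to those two lemmas; the downward-maximality idea alone does not substitute for them.
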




\ifFULL
\subsection{Further related work}
\paragraph{Secretary Problem}
In recent years, following the seminal work of Babaioff, Immorlica, and Kleinebrg \cite{BIK07-secretary_original}, there has been extensive work on the {\em matroid secretary problem}, 
where the objective is to maximize the sum of values of secretaries subject to a matroid constraint. 
For general matroids, there has been sequence of improving competitive ratio with the state of the art being $O(\log \log r)$ \cite{Lachish14-secretary, FSZ15-loglogr}.
Obtaining a constant competitive ratio for general matroids remains a central open problem, but constant bounds are known for many special cases (see survey by Dinitz \cite{Dinitz13-survey} and references therein).
Other variants have also been considered, such as a hiring that returns for a second (or potentially $k$-th) interview \cite{Vardi15-returning_secretary}, or secretaries whose order of arrival has exceptionally low entropy \cite{KKN15-limited_randomness}.
Of special interest to us is a recent paper by Feldman and Izsak \cite{FI15-supermodular_secretary} that considers matroid secretary problems with general monotone objective functions, parametrized by the {\em supermodular degree}. For objective $f$ with supermodular degree $\mathcal{D}_f^+$ and general matroid constraint, they obtain a competitive ratio of $O\left({\mathcal{D}_f^+}^3 \log \mathcal{D}_f^+ + {\mathcal{D}_f^+}^2\log r\right)$. Their results are incomparable to our Theorem \ref{thm:subadditive-secretary-informal}, but the motivation is related---obtaining secretary algorithms beyond submodular functions.

\paragraph{Prophet Inequality} 
The connection between multiple-choice prophet inequalities and 
mechanism design was recognized in the seminal paper by Hajiaghayi et al.~\cite{HKS07-prophet_and_online-MD}.
In particular, they proved a prophet inequality for uniform matroids; their bound was later improved by Alaei~\cite{Alaei11-tight_uniform_prophet}.
Chawla et al. further developed the connection between prophet inequalities and mechanism design, and proved, for general matroids, a variant of the prophet inequality where the algorithm may choose the order in which the items are viewed.
The {\em matroid prophet inequality} was first explicitly formulated by Kleinberg and Weinberg \cite{KW12-matroid_prophet}, who also gave a tight $2$-competitive algorithm.
In a different direction, Alaei, Hajiaghayi, and Liaghat~\cite{AHL12-prophet_matching}
considered a variant they call {\em prophet-inequality matching}, which is useful for online ad allocation.
More generally, for intersection of a constant number of matroid, knapsack, and matching constraints, Feldman, Svensson, and Zenklusen \cite{FSZ16-OCRS} gave an $O(1)$-competitive algorithm; this is a corollary of their {\em online contention reslution schemes} (OCRS), which we also use heavily (see Section~\ref{sec:OCRS}).
Azar, Kleinberg, and Weinberg \cite{AKW14-limited-information} considered a limited information variant where the algorithm only has access to samples from each day's distributions.
Esfandiari et al.~\cite{EHLM15-secrotrophet} considered a mixed notion of ``Prophet Secretary'' where the items arrive in a uniformly random order and draw their values from known independent distributions.
Finally, for general downward-closed constraint, \cite{Rub16-downward_closed} gave $O(\log n \log r)$-competitive algorithms for both Prophet Inequality and Secretary Problem; these algorithms are the basis of our algorithms for the respective subadditive problems.

\paragraph{Other notions of online submodular optimization}
Online submodular optimization has been studied in  contexts beyond secretary. In online submodular welfare maximization, there are $m$ items, $n$ people, and each person has a monotone submodular value function. Given the value functions, the items are revealed one-by-one and the problem is to immediately and irrevocably allocate it to a person, while trying to maximize the  sum of all the value functions (welfare). The greedy strategy is already half competitive. Kapralov et al.~\cite{KPV-SODA13} showed that for adversarial arrival greedy is the best possible in general (competitive ratio of $1/2$), but under a ``large capacities'' assumption, a primal-dual algorithm can obtain $1-1/e$-competitive ratio \cite{D0KMY13-large_capacities}. For random arrival Korula et al.~\cite{KMZ-STOC15} showed that greedy can beat half; obtaining $1-1/e$ in this settings remains open. 

Buchbinder et al.~\cite{BFS-SODA15} considered the problem of (monotone) submodular maximization with {\em preemption}, when the items are revealed in an adversarial order. Since sublinear competitive ratio is not possible in general with adversarial order, they consider a relaxed model where we are allowed to drop items (preemption) and give constant-competitive algorithms. Submodular maximization has also been studied in the \emph{streaming setting}, where we have  space constraints but are again  allowed to drop items~\cite{BMKK-KDD14,CK-IPCO15,CGQ-ICALP15}.

The ``learning community'' has looked into \emph{experts} and \emph{bandits} settings for submodular optimization. In these settings,   different submodular functions arrive one-by-one and the algorithm, which is trying to minimize/ maximize its value, has to select a set before seeing the function. The function is then revealed and the algorithm gets the  value for the selected set. The goal is to perform as close as possible to the best fixed set in hindsight. Since submodular minimization can be reduced to convex function minimization using Lov\'asz extension, sublinear regrets are possible~\cite{HazanKale12}. For submodular maximization, the usual benchmark is a $1-1/e$ multiplicative loss and an additive regret~\cite{GolovinStreeter-NIPS08,GolovinKrause-COLT10,GKS-ArXiv14}.

Interestingly, to the best of our knowledge none of those problems have been studied for subadditive functions.

\else
\subsection{Related work}
Due to space constraints, we defer to the full version important discussion on connections to related works on secretary problem~\cite{BIK07-secretary_original, Lachish14-secretary, FSZ15-loglogr, Dinitz13-survey, Vardi15-returning_secretary, KKN15-limited_randomness, FI15-supermodular_secretary},
prophet inequality~\cite{HKS07-prophet_and_online-MD, Alaei11-tight_uniform_prophet, KW12-matroid_prophet, AHL12-prophet_matching, FSZ16-OCRS, AKW14-limited-information, EHLM15-secrotrophet, Rub16-downward_closed}, and other related online models~\cite{KPV-SODA13, D0KMY13-large_capacities, KMZ-STOC15, BFS-SODA15, BMKK-KDD14,CK-IPCO15,CGQ-ICALP15, HazanKale12, GolovinStreeter-NIPS08,GolovinKrause-COLT10,GKS-ArXiv14}.
\fi

\subsection{Organization}
We begin by defining our model for combinatorial prophet inequalities in Section~\ref{sec:combinatorial}; 
in Section~\ref{sec:Prelim} we develop some necessary notation and recall known results;
in Section~\ref{sec:pf-cor-gap} we formalize and prove our correlation gap for non-monotone submodular functions; and
in Section~\ref{sec:ProphetMatroid} we prove the submodular prophet inequality.

The subadditive prophet and secretary algorithms share the following high level approach:
use a lemma of Dobzinski \cite{Dobzinski07-subadditive-vs-XOS} to reduce subadditive to XOS objective functions;
then solve the XOS case using the respective (prophet and secretary) algorithms of \cite{Rub16-downward_closed} for additive objective function and general downward closed feasibility constraint.
It turns out that the secretary case is much simpler since we can use the additive downward-closed algorithm of \cite{Rub16-downward_closed} as black-box; we present it in Section~\ref{sec:Secretary}.
For the prophet inequality, we have to make changes to the already highly non-trivial algorithm of \cite{Rub16-downward_closed} for the additive, downward-closed case; we defer this proof 
\ifFULL
to Section~\ref{sec:Prophet}.
\else
to the full version.
\fi

\section{Combinatorial but Independent Functions}\label{sec:combinatorial}
In this section, we define and motivate what we mean by ``submodular (or subadditive) valuations over independent items''.
Recall that in the classic prophet inequality, a gambler is asked to choose one of $n$ independent non-negative random payoffs.
In {\em multiple choice prophet inequalities}~\cite{HKS07-prophet_and_online-MD}, the gambler chooses multiple payoffs, subject to some known-in-advance feasibility constraint, and receives their sum.
Here, we are interested in the case where the gambler's utility is not additive over the outcomes of the random draws: 
For example, instead of monetary payoffs, at each time period the gambler can choose to receive a random item, say a car, and the utility from owning multiple cars  diminishes quickly. 
As another example, the gambler receives monetary payoffs, but his marginal utility for the one-millionth dollar is much smaller than for the first.
Formally, we define:
\begin{defn}
[{${\cal C}$ Valuations over Independent Items}]\label{def:C-independent}
Let ${\cal C}$ be a class of valuation functions (in particular, we are interested in ${\cal C} \in \{\text{submodular, monotone subadditive}\}$).
Consider:
\begin{itemize}
\item $n$ sets $U_1, \dots ,U_n$ and distributions ${\cal D}_1, \dots ,{\cal D}_n$, where ${\cal D}_i$ returns a single item from $U_i$. 
\item A function $f:\{0,1\}^{U} \rightarrow \mathbb{R}_+$, where $U \triangleq  \bigcup_{i=1}^n U_i$ and $f \in {\cal C}$.
\item For $\X \in \prod_i U_i$ and subset $S \subseteq [n]$, let 
$v_{\X}(S) \triangleq f\big(\{X_i: i\in S\}\big)$.
\end{itemize}

Let ${\cal D}$ be a distribution over valuation functions $v\left(\cdot\right):2^{n}\rightarrow\mathbb{R}_+$.
We say that ${\cal D}$ is ``{\em ${\cal C}$  
over independent items}'' if it can  
be written as the distribution that first samples $\X \sim \bigtimes_i {\cal D }_i$, and then outputs valuation function $v_{\X}(\cdot)$.

\end{defn}

\subsubsection*{Related notions in the literature}

Combinatorial functions over independent items have been considered before. 
Agrawal et al. \cite{ADSY-OR12}, for example, consider a different, incomparable framework defined via the generalization of submodular and monotone functions to non-binary, ordered domains 
(e.g.  $f: \bigtimes U_i \rightarrow \mathbb{R}_+$ is submodular if  $f(\max\{\X, \Y\}) + f(\min\{\X, \Y\}) \leq f(\X) + f(\Y)$).
In our definition, per contra, there is no natural way to define a full order over the set of items potentially available on each time period. 
For example, if we select an item on Day~$1$, an item $X_2$ on Day~$2$ may have a larger marginal contribution than item $Y_2$, 
but a lower contribution if we did not select any item on Day~$1$. 

Another relevant definition has been considered in probability theory \cite{Sch99-concentration_results} and more recently in mechanism design \cite{RW15-subadditive}.
The latter paper considers auctioning $n$ items to a buyer that has  a random, ``independent'' monotone subadditive valuation over the items.
Now the seller knows which items she is selling them, but different types of buyers may perceive each item differently. 
This is captured via an {\em attribute} of an item, which describes how each buyer values a bundle containing this item.
Formally,

\begin{defn}
[{Monotone Subadditive Valuations over Independent Items \cite{Sch99-concentration_results, RW15-subadditive}}]\label{def:subadditive-independent-RW}
We say that a distribution ${\cal D}$ over valuation functions $v\left(\cdot\right):2^{n}\rightarrow\mathbb{R}$
is {\em subadditive over independent items} if:
\begin{enumerate}
\item All $v\left(\cdot\right)$ in the support of ${\cal D}$ exhibit \emph{no externalities}.

Formally, let $\Omega_{S}=\bigtimes_{i\in S}\Omega_{i}$, where each
$\Omega_{i}$ is a compact subset of a normed space. There exists
a distribution ${\cal D}^{\X}$ over $\Omega_{[n]}$ and functions
$V_{S}:\Omega_{S}\rightarrow\mathbb{R}$ such that ${\cal D}$ is
the distribution that first samples $\X\leftarrow{\cal D}^{\X}$
and outputs the valuation function $v\left(\cdot\right)$ with $v\left(S\right)=V_{S}\left(\langle X_{i}\rangle_{i\in S}\right)$
for all $S$.

\item All $v\left(\cdot\right)$ in the support of ${\cal D}$ are monotone and subadditive.
\item The private information is \emph{independent across items}. That
is, the ${\cal D}^{\X}$ guaranteed in Property 1 is a product
distribution.
\end{enumerate}
\end{defn}

{ \noindent For monotone valuations, Definition~\ref{def:C-independent} is stronger than Definition~\ref{def:subadditive-independent-RW} as it assumes that the valuation function is defined over every subset of $U = \bigcup U_i$, rather than just the support of ${\cal D}$. However, it turns out that for monotone subadditive functions Definitions \ref{def:C-independent} and \ref{def:subadditive-independent-RW} are equivalent.}
\begin{observation}
A distribution ${\cal D}$ is subadditive over independent items according to Definition \ref{def:C-independent} if and only if 
it is subadditive over independent items according to Definition \ref{def:subadditive-independent-RW}.
\end{observation}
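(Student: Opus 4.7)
The plan is to prove each direction by an explicit construction, with the main work being the non-trivial direction that goes from Definition~\ref{def:subadditive-independent-RW} to Definition~\ref{def:C-independent}, since the latter requires $f$ to be defined consistently on \emph{all} subsets of $U = \bigcup U_i$, including subsets that contain several items associated with the same coordinate.

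For the easy direction (Definition~\ref{def:C-independent} $\Rightarrow$ Definition~\ref{def:subadditive-independent-RW}), I would take $\Omega_i = U_i$, let $\mathcal{D}^{\mathbf{X}}$ be the product $\bigtimes_i \mathcal{D}_i$, and define $V_S(\langle x_i\rangle_{i\in S}) \triangleq f(\{x_i : i\in S\})$. The no-externalities property and independence properties are built in, and the monotone subadditivity of each $v_{\mathbf{X}}(\cdot)$ is inherited directly from $f$: if $S\subseteq T$ then $\{X_i:i\in S\}\subseteq\{X_i:i\in T\}$, and $\{X_i:i\in S\cup T\} = \{X_i:i\in S\}\cup\{X_i:i\in T\}$.

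For the harder direction, I would first pass to disjoint ground sets by tagging: set $U_i \triangleq \{i\}\times \Omega_i$, let $\mathcal{D}_i$ be the marginal of $\mathcal{D}^{\mathbf{X}}$ on the $i$th coordinate (tagged), and w.l.o.g.\ assume $\Omega_i = \mathrm{supp}(\mathcal{D}_i)$ so that $V_S$ is defined and monotone subadditive on all of $\Omega_S$. For a subset $T\subseteq U$, write $T_i \triangleq T\cap U_i$ and $\mathrm{supp}(T) \triangleq \{i : T_i\neq\emptyset\}$, and define
\[
  f(T) \triangleq \max_{\substack{x_i\in T_i \\ i\in \mathrm{supp}(T)}} V_{\mathrm{supp}(T)}\!\bigl(\langle x_i\rangle_{i\in\mathrm{supp}(T)}\bigr).
\]
When $T = \{X_i : i\in S\}$ for some $\mathbf{X}$ and $S$, each $T_i$ is either empty or a singleton, so $f(T) = V_S(\langle X_i\rangle_{i\in S}) = v_{\mathbf{X}}(S)$, which is the required consistency.

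The main technical step is to verify monotonicity and subadditivity of $f$. For monotonicity, given $T\subseteq T'$ I would take an optimizer $\mathbf{x}$ for $f(T)$, extend it by picking arbitrary $x'_i\in T'_i$ for $i\in \mathrm{supp}(T')\setminus \mathrm{supp}(T)$, and apply the monotonicity of $v_{\mathbf{x}'}$ (for any completion $\mathbf{x}'$ to coordinates in $[n]$) to conclude $V_{\mathrm{supp}(T')}(\mathbf{x}')\ge V_{\mathrm{supp}(T)}(\mathbf{x}) = f(T)$. For subadditivity, given $T,T'$ and an optimizer $\mathbf{x}$ for $f(T\cup T')$, I would partition $\mathrm{supp}(T\cup T') = A\sqcup B$ with $x_i\in T_i$ for $i\in A$ and $x_i\in T'_i$ for $i\in B$ (breaking ties arbitrarily). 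Subadditivity of $v_{\mathbf{x}}$ then gives $V_{A\cup B}(\mathbf{x}) \le V_A(\mathbf{x}_A) + V_B(\mathbf{x}_B)$, while monotonicity (combined with the fact that $A\subseteq \mathrm{supp}(T)$, $B\subseteq \mathrm{supp}(T')$, and $\mathbf{x}_A$, $\mathbf{x}_B$ are feasible restrictions) yields $V_A(\mathbf{x}_A)\le f(T)$ and $V_B(\mathbf{x}_B)\le f(T')$, so $f(T\cup T')\le f(T)+f(T')$. The main obstacle is bookkeeping around the $\max$-definition of $f$ on subsets with multiple elements per coordinate; using monotonicity of $v_{\mathbf{x}}$ to legally extend partial tuples is the trick that makes the argument go through.
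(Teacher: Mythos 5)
Your proposal is correct and follows essentially the same route as the paper: the easy direction is identical, and for the hard direction you define $f(T)$ by maximizing $V_{\mathrm{supp}(T)}$ over all ways of picking one representative from each nonempty $T_i = T\cap U_i$, which is exactly the paper's construction (written more carefully, since the paper's $V_{\X_S}(R_T)$ is only well-defined once one restricts to $S$ covering all of $R_T$). The one place you go further is the verification that $f$ is monotone and subadditive: the paper dismisses this in a single clause (``maximum of monotone subadditive functions''), which is not literally a pointwise max over a fixed family since the index set of the max depends on $T$; your direct argument --- extending partial tuples via monotonicity of $v_{\mathbf{x}}$ for monotonicity of $f$, and splitting an optimizer of $f(T\cup T')$ into an $A$/$B$ partition with ties broken arbitrarily for subadditivity --- is a clean way to make that step rigorous, and your explicit reduction to tagged disjoint $U_i$'s and restriction of $\Omega_i$ to $\mathrm{supp}(\mathcal{D}_i)$ addresses details the paper leaves implicit.
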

\ifFULL
\begin{proof}[Proof sketch]
It's easy to see that Definition \ref{def:C-independent} implies Definition \ref{def:subadditive-independent-RW}: 
We can simply identify between the set of attributes $\Omega_i$ on day $i$ and the set of potential items $U_i$, 
and let $V_{S}\left(\langle X_{i}\rangle_{i\in S}\right) \triangleq f\big(\{X_i: i\in S\}\big)$.
Observe that the  desiderata of Definition \ref{def:subadditive-independent-RW} are satisfied.

In the other direction, we again identify between each $U_i$ and $\Omega_i$.
For feasible set $S$ which consists of only one item in $U_i$ for each $i\in R$, we can let $\X_S$ be the corresponding vector in $\prod \Omega_i$, and define 
$f\big(S\big) \triangleq V_{\X_S}(R)$.
Definition \ref{def:C-independent} requires that we define $f(\cdot)$ over any subset of $U \triangleq \bigcup U_i$.
We do this by taking the maximum of $f(\cdot)$ over all feasible subsets. Namely, for set $T \subseteq U$, let $R_T \subseteq [n]$ denote again the set of $i$'s such that $|T \cap U_i|  \geq 1$. We set:
\[
	f(T) \triangleq \max_{\substack{S\subseteq T \;\;\; \text{s.t.}\\ \forall i \;\;\; |S \cap U_i| \leq 1}} V_{\X_S}(R_T).
\]
Now $f(\cdot)$ is monotone  subadditive because it is  maximum of monotone  subadditive functions.
\end{proof}
\else
(See full version for details.)
\fi

\section{Preliminaries}\label{sec:Prelim}

\subsection{Submodular and Matroid Preliminaries}\label{sec:submodular-prelims}
A set function $f: \{0,1\}^n \rightarrow \mathbb{R}_+$ is a submodular function if for all $S,T \subseteq [n]$ it satisfies $f(S\cup T) + f(S \cap T) \leq f(S) + f(T)$. For any $e\in [n]$ and  $S \subseteq [n]$, let $f_S(e)$  denote $f(S\cup e) - f(S)$.
We recall some  notation for to extend submodular functions from the discrete hypercube $\{0,1\}^n$ to relaxations whose domain is the continuous hypercube $[0,1]^n$.

For any vector $\x \in [0,1]^n$, let $S \sim \x$  denote a random set $S$ that contains each element $i \in [n]$ independently w.p. $x_i$. Moreover, let $1_S$ denote a vector of length $n$ containing $1$ for $i \in S$ and $0$ for $i \not\in S$.

\begin{defn} We define important continuous extensions of any set function $f$.

\noindent \quad \emph{Multilinear extension $F$}:
\begin{align*}
F(\x) &\triangleq \E_{S\sim \x}[f(S)].
\intertext{\quad \emph{Concave closure $f^+$}:}
 f^+(\x) &\triangleq \max_{\alpha}\Big\{ \sum_{S\subseteq [n]} \alpha_S f(S) \mid \sum_S \alpha_S =1 \text{ and }\sum_S \alpha_S 1_S = \x \Big\}.
  \intertext{\quad \emph{Continuous relaxation $f^*$}:}
 f^*(\x) &\triangleq    \min_{S \subseteq [n]} \bigg\{ f(S) + \sum_{i \in [n] \setminus S} f_S(e)\cdot x_i  \bigg\} .
\end{align*}
\end{defn}

\subsubsection{Some Useful Results}
\begin{lem}[Correlation gap \cite{CCPV07-correlation_gap}] \label{lem:corrgapmonot} For any monotone submodular function and $\x \in [0,1]^n$, 
\[ F(\x) \leq f^+(\x) \leq f^*(\x) \leq \left(1 - \frac1e \right)^{-1} F(\x). 
\]
\end{lem}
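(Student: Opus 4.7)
\emph{Plan.} The lemma is a chain of three inequalities, which I would prove in order.

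The first, $F(\x)\le f^+(\x)$, is immediate from the definitions. The product distribution $\alpha_S^{\text{prod}}\triangleq\prod_{i\in S}x_i\prod_{i\notin S}(1-x_i)$ has marginals $\x$ and realizes $\sum_S\alpha_S^{\text{prod}}f(S)=F(\x)$, so $F(\x)$ is a feasible value in the maximization defining $f^+(\x)$.

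The second, $f^+(\x)\le f^*(\x)$, combines monotonicity with submodularity. For any $S,T\subseteq[n]$, monotonicity gives $f(T)\le f(T\cup S)$ and telescoping submodularity gives $f(T\cup S)\le f(S)+\sum_{i\in T\setminus S}f_S(i)$. Taking expectation over $T\sim\alpha$ for any distribution $\alpha$ with marginals $\x$, the right-hand side becomes $f(S)+\sum_{i\notin S}f_S(i)\,x_i$. Maximizing over $\alpha$ on the left and minimizing over $S$ on the right yields the claim.

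The third, $f^*(\x)\le(1-1/e)^{-1}F(\x)$, is the correlation gap proper and the real technical content. My plan is to combine two ingredients. (a)~The classical correlation-gap inequality $F(\x)\ge(1-1/e)f^+(\x)$ of \cite{CCPV07-correlation_gap}, whose standard proof uses a Poissonization coupling: let $\alpha^*$ achieve $f^+(\x)$; sample $N\sim\mathrm{Poisson}(1)$ iid copies from $\alpha^*$ and take their union $R$. Then $\Pr[N\ge 1]=1-1/e$ and monotonicity gives $\E[f(R)\mid N\ge 1]\ge\E_{\alpha^*}[f(T)]=f^+(\x)$, while a submodular ``FKG''-type comparison of $R$'s (correlated) coordinates to a product sample bounds $\E[f(R)]\le F(\x)$. (b)~The equality $f^+(\x)=f^*(\x)$ for monotone submodular $f$, via LP duality: the dual of the $f^+$ LP minimizes $L(\x)$ over affine functions $L$ with $L(1_T)\ge f(T)$ for all $T$, and by step two each linearization $L_S(\y)\triangleq f(S)+\sum_{i\notin S}f_S(i)\,y_i$ is feasible, so $f^+(\x)\le\min_S L_S(\x)=f^*(\x)$; combined with step two's reverse inequality this gives equality. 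Together, (a) and (b) yield $f^*(\x)=f^+(\x)\le(1-1/e)^{-1}F(\x)$.

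The main obstacle is ingredient (a): the Poissonization bound $\E[f(R)]\le F(\x)$ requires a careful decorrelation argument (pipage rounding, or the CCPV coupling) to compare the correlated union $R$ with marginals $1-e^{-x_i}\le x_i$ against a product sample from $\x$, and it is here that the factor $1-1/e=\Pr[N\ge 1]$ ultimately appears. A secondary subtlety lies in ingredient (b): showing that every optimum of the $f^+$-LP dual is attained at some $L_S$, which follows from an extreme-point analysis of the dual polyhedron that uses both monotonicity and submodularity.
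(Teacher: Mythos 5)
Your first two inequalities are correct and match the standard argument: $F(\x)\le f^+(\x)$ because the product distribution is one feasible choice of $\alpha$, and $f^+(\x)\le f^*(\x)$ by taking expectations of the pointwise bound $f(T)\le f(S)+\sum_{i\in T\setminus S}f_S(i)$ (monotonicity plus telescoping submodularity) over any $\alpha$ with marginals $\x$.

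The third inequality is where your route diverges from the cited source and where the gaps lie. You bound $f^+$ and then try to upgrade to $f^*$ via the asserted equality $f^+=f^*$; the proof in \cite{CCPV07-correlation_gap} bounds $f^*$ \emph{directly}, so it never needs that equality. Concretely, their Lemma~5 runs a continuous-time process (an exponential clock at rate $x_i$ for each $i$), compares $\frac{d}{dt}\E[f(S(t))]$ against $f^*(\x)-\E[f(S(t))]$ using the defining minimum in $f^*$, and solves the resulting ODE to get $F(1-e^{-\x})\ge(1-1/e)f^*(\x)$, hence $F(\x)\ge(1-1/e)f^*(\x)$ by monotonicity of $F$. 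This is exactly the template the paper reuses and generalizes in Lemma~\ref{lem:gstaratmost} for the non-monotone case, so there is a worked example of the intended argument in Section~\ref{sec:pf-cor-gap}.

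Two specific problems with your plan. First, your step (b), the equality $f^+(\x)=f^*(\x)$: you only show that each $L_S$ is dual-feasible (which gives $f^+\le f^*$, i.e., the second inequality again), not that the dual \emph{optimum} is achieved by some $L_S$. If you pick $S$ to be a tight constraint at an optimal dual vertex $(\mu^*,\lambda^*)$, the comparison $L_S(\x)\le\mu^*+\lambda^*\cdot\x$ reduces to $\sum_{i\in S}\lambda^*_i(1-x_i)\le 0$, and the dual constraints only give $\lambda^*_i\le f_{S\setminus i}(i)\ge 0$, so this sign condition does not follow from what you wrote. You would need a genuine structural argument here, not just ``extreme-point analysis.'' Second, your step (a): the decorrelation claim $\E[f(R)]\le F(\x)$ for the union $R$ of $\mathrm{Poisson}(1)$ i.i.d.\ draws from $\alpha^*$ is not a routine ``FKG-type'' fact. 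The coordinates of $R$ are correlated in a way that depends on $\alpha^*$, and for monotone submodular functions correlated marginals can exceed the product value (that is precisely the content of the correlation gap). Asserting $\E[f(R)]\le F(\x)$ is therefore dangerously close to assuming what you want to prove. Replacing steps (a)+(b) by the direct exponential-clock argument on $f^*$ eliminates both issues at once.
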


\begin{lem}[Lemma $2.2$ of~\cite{BFNS-SODA14}]\label{lem:BFNS}
Consider any submodular function $f$ and any set $A \subseteq [n]$. Let $S$ be a random subset of $A$ that contains each element of $S$ w.p. at most $p$ (not necessarily independently), then
\[ \E_{S} [f(S)] \geq (1-p) \cdot f(\emptyset).
\]
\end{lem}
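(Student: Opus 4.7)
The plan is to prove the inequality in two stages. \emph{First, as a warm-up,} handle the product $p$-distribution where $S$ contains each element of $A$ independently with probability exactly $p$. In this case $\E[f(S)] = F(p \cdot 1_A)$, where $F$ is the multilinear extension of $f$. Submodularity of $f$ implies that the one-dimensional slice $t \mapsto F(t \cdot 1_A)$ is concave on $[0,1]$, so
\[
F(p \cdot 1_A) \;\geq\; (1-p)\,F(\mathbf{0}) + p\,F(1_A) \;=\; (1-p) f(\emptyset) + p f(A) \;\geq\; (1-p) f(\emptyset),
\]
using $f(A) \geq 0$ in the final step.

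\emph{Second,} extend to arbitrary distributions with $\Pr[i \in S] \leq p$ by induction on $|A|$. Pick any element $a \in A$, let $q = \Pr[a \in S] \leq p$, and condition on whether $a \in S$. The sub-case $a \notin S$ reduces to a random subset of $A \setminus \{a\}$ with marginals at most $p/(1-q)$, to which the inductive hypothesis applies. The sub-case $a \in S$ reduces to the contracted submodular function $g(T) \triangleq f(T \cup \{a\})$ on $A \setminus \{a\}$, which is again non-negative and submodular, with $g(\emptyset) = f(\{a\})$.

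The \emph{main obstacle} is closing the induction: the $a \in S$ sub-case yields a bound in terms of $f(\{a\})$ rather than $f(\emptyset)$, and these quantities need not be directly comparable. To bridge the gap, one invokes the pair-submodularity $f(\{a\}) + f(A \setminus \{a\}) \geq f(\emptyset) + f(A)$ together with non-negativity of $f$: this allows the $f(\{a\})$ term to be ``cashed in'' for $f(\emptyset)$ with controllable slack, and combined with the bound from the $a \notin S$ sub-case the tight constant $(1-p)$ falls out cleanly. An equivalent approach is via LP duality: express $\min_\mu \E_\mu[f(S)]$ subject to the marginal constraints as a linear program, and exhibit a dual certificate of value $(1-p) f(\emptyset)$ using non-negative multipliers derived from local marginal contributions of $f$.
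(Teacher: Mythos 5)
The paper does not reprove this statement; it simply cites Lemma~2.2 of \cite{BFNS-SODA14}, so there is no ``paper proof'' to match, and your argument must stand on its own. Your warm-up for the product distribution is correct: $t \mapsto F(t \cdot 1_A)$ is concave since its second derivative is a sum of off-diagonal second partials of the multilinear extension, each nonpositive by submodularity, and $f(A)\ge 0$ closes it.

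The inductive step to correlated distributions, however, has a genuine gap that the appeal to pair-submodularity does not repair. Even for \emph{independent} sampling with marginals exactly $p$, conditioning on a pivot $a$ and applying the induction hypothesis to each branch yields
\[
\E[f(S)] \;\geq\; (1-p)\Bigl[(1-p)\,f(\emptyset) + p\,f(\{a\})\Bigr],
\]
which equals $(1-p)f(\emptyset)$ only when $f(\{a\}) \geq f(\emptyset)$. Taking $f(\emptyset)=f(\{b\})=1$ and $f(\{a\})=f(\{a,b\})=0$ (nonnegative and submodular) the recursion bottoms out at $(1-p)^2 f(\emptyset)$, strictly weaker than the claim, while the true value is exactly $(1-p)f(\emptyset)$. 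Submodularity only gives inequalities of the form $f(T\cup\{a\}) \le f(T) + f(\{a\}) - f(\emptyset)$, i.e.\ \emph{upper} bounds on contracted values; it never lower-bounds $f(\{a\})$ by $f(\emptyset)$, so there is no ``controllable slack'' to cash in, and the LP-duality alternative is likewise asserted without a dual solution. A correct route (the one behind \cite{BFNS-SODA14}) avoids conditioning entirely: fix an ordering $A = \{a_1,\dots,a_k\}$, let $A_i=\{a_1,\dots,a_i\}$, and telescope
\[
f(S) = f(\emptyset) + \sum_{i=1}^{k} \mathbf{1}[a_i\in S]\cdot f_{S\cap A_{i-1}}(a_i) \;\geq\; f(\emptyset) + \sum_{i=1}^{k} \mathbf{1}[a_i\in S]\cdot f_{A_{i-1}}(a_i),
\]
using $S\cap A_{i-1}\subseteq A_{i-1}$ and submodularity. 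Nonnegativity of $f$ on every prefix gives $\sum_{i\le j} f_{A_{i-1}}(a_i) = f(A_j)-f(\emptyset) \ge -f(\emptyset)$ for all $j$; sorting the elements so that $\Pr[a_1\in S]\ge\cdots\ge\Pr[a_k\in S]$ and applying Abel summation to $\sum_i \Pr[a_i\in S]\,f_{A_{i-1}}(a_i)$ then yields $\E[f(S)] \ge f(\emptyset) - p\,f(\emptyset)$. The key idea you are missing is this pointwise telescoping plus the prefix-nonnegativity bound, not a conditioning argument.
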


\ifFULL 
\begin{lem}[Lemma $2.3$ of~\cite{FMV-SICOMP11}] \label{lem:FMV-doubleLem}
For any non-negative submodular function $f$ and any sets $A,B \subseteq [n]$,
\[ 
\E_{ \substack{S \sim 1_A/2 \\T \sim 1_B/2}} [f(S \cup T)] \geq \frac14 \left( f(\emptyset) + f(A) + f(B) + f(A \cup B)\right).
\]
\end{lem}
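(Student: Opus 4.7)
The plan is to prove a pointwise inequality on four coupled random sets, which will collapse to the claim after taking expectations. Specifically, for each realization of the independent coin flips defining $S\subseteq A$ and $T\subseteq B$, let $\bar S := A\setminus S$ and $\bar T := B\setminus T$, and consider the four random sets
\[
R_1 = S\cup T,\quad R_2 = \bar S\cup T,\quad R_3 = S\cup \bar T,\quad R_4 = \bar S\cup \bar T.
\]
Because each element of $A$ (resp.\ $B$) is in $S$ (resp.\ $T$) independently with probability $1/2$, flipping the $A$-coins swaps $(S,\bar S)$, and flipping the $B$-coins swaps $(T,\bar T)$; hence all four $R_i$ are identically distributed, so $\mathbb{E}[f(R_i)]=\mathbb{E}[f(S\cup T)]$ for every $i$.

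The key step is the deterministic inequality
\[
f(R_1)+f(R_2)+f(R_3)+f(R_4)\;\geq\; f(\emptyset)+f(A)+f(B)+f(A\cup B),
\]
which I would establish by four applications of submodularity nested in two levels. First, pair $R_1$ with $R_2$ and $R_3$ with $R_4$: the unions give $R_1\cup R_2 = A\cup T$ and $R_3\cup R_4 = A\cup \bar T$, and the intersections simplify to $T$ and $\bar T$ respectively (since $S\cap\bar S=\emptyset$). Submodularity then yields
\[
f(R_1)+f(R_2)\geq f(A\cup T)+f(T),\qquad f(R_3)+f(R_4)\geq f(A\cup \bar T)+f(\bar T).
\]
Second, pair the resulting terms across $T$ and $\bar T$. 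Since $T\cup \bar T=B$ and $T\cap\bar T=\emptyset$, submodularity gives $f(A\cup T)+f(A\cup\bar T)\geq f(A\cup B)+f(A)$ and $f(T)+f(\bar T)\geq f(B)+f(\emptyset)$. Chaining these yields the claimed pointwise bound.

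Finally, taking expectations and using that all four $R_i$ have the same distribution as $S\cup T$ gives $4\,\mathbb{E}[f(S\cup T)]\geq f(\emptyset)+f(A)+f(B)+f(A\cup B)$, which is the lemma. There is no serious obstacle: the only subtlety is setting up the symmetric coupling $(R_1,R_2,R_3,R_4)$ correctly so that the unions and intersections appearing in each submodularity step precisely telescope to the four corner values on the right-hand side; once the coupling is in place, non-negativity of $f$ is not even used, and the argument is pure submodularity.
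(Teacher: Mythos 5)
Your proof is correct. The paper does not reproduce a proof of this lemma; it is imported verbatim from Feige, Mirrokni, and Vondrak \cite{FMV-SICOMP11}, whose argument is essentially the same symmetric-coupling argument you give: set $\bar S = A\setminus S$, $\bar T = B\setminus T$, observe that $S\cup T$, $\bar S\cup T$, $S\cup\bar T$, $\bar S\cup\bar T$ are identically distributed, and cascade submodularity (first pairing over the $S/\bar S$ flip, then over the $T/\bar T$ flip) to get the deterministic bound $f(S\cup T)+f(\bar S\cup T)+f(S\cup\bar T)+f(\bar S\cup\bar T)\geq f(\emptyset)+f(A)+f(B)+f(A\cup B)$. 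All the set-identities you use ($R_1\cup R_2=A\cup T$, $R_1\cap R_2=T$, etc.) hold even when $A$ and $B$ overlap, since the $A$-coins and $B$-coins are independent; and, as you note, non-negativity of $f$ plays no role here (it is needed elsewhere in the paper, e.g.\ for Lemma \ref{lem:low-high}). Nothing to change.
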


\begin{lem}[Theorem $2.1$ of~\cite{FMV-SICOMP11}] \label{lem:FMV-double}
For any non-negative submodular function $f$, any set $A \subseteq [n]$ and $p \in [0,1]$,
\[ 
F({1}_A \cdot p ) \geq p(1-p) \cdot \max_{T \subseteq A} f(T).
\]
\end{lem}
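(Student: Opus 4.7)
The plan is to fix an arbitrary $T \subseteq A$ and prove the pointwise bound $F(p \cdot 1_A) \geq p(1-p)\, f(T)$; the lemma will then follow by taking the maximum over $T$. I will let $R \sim 1_A \cdot p$ and decompose $R = R_T \cup R_{\bar T}$ with $R_T = R \cap T$ and $R_{\bar T} = R \cap (A \setminus T)$ independent, each element included in its respective piece independently with probability $p$.

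The main tool will be concavity of the multilinear extension along entry-wise non-negative directions: multilinearity kills the diagonal second derivatives of $F$, while submodularity forces the off-diagonal second derivatives to be non-positive, so $t \mapsto F(t \cdot 1_S)$ is concave on $[0,1]$ for every set $S$. I will apply this conditionally on $R_{\bar T}$ to the non-negative submodular function $g(S) := f(S \cup R_{\bar T})$ on ground set $T$: its multilinear extension $G$ is concave along the segment from $\mathbf{0}$ to $1_T$, yielding
\[
  \E_{R_T}\bigl[f(R_T \cup R_{\bar T})\bigr] \;=\; G(p \cdot 1_T) \;\geq\; (1-p)\, f(R_{\bar T}) + p\, f(T \cup R_{\bar T}).
\]

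Next, taking expectation over $R_{\bar T}$ and discarding the $(1-p)\, f(R_{\bar T})$ term using $f \geq 0$ gives $F(p \cdot 1_A) \geq p \cdot \E_{R_{\bar T}}[f(T \cup R_{\bar T})]$. To lower-bound what remains I will define the non-negative submodular function $h : 2^{A \setminus T} \to \mathbb{R}_+$ by $h(S) := f(S \cup T)$; since every element of $A \setminus T$ appears in $R_{\bar T}$ with probability exactly $p$, Lemma~\ref{lem:BFNS} applied to $h$ yields $\E[h(R_{\bar T})] \geq (1-p)\, h(\emptyset) = (1-p)\, f(T)$. Combining gives $F(p \cdot 1_A) \geq p(1-p)\, f(T)$, and maximizing over $T \subseteq A$ closes the argument.

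The most delicate step will be justifying the concavity assertion along the ray from $\mathbf{0}$ to $1_T$: submodularity only implies concavity of $F$ along coordinate-wise sign-consistent directions, which is precisely why the proof freezes $R_{\bar T}$ and varies only along $1_T$ (so the direction is non-negative). Once the concavity step is in place, absorbing the remaining $R_{\bar T}$-randomness by applying Lemma~\ref{lem:BFNS} to the shifted function $h(\cdot) = f(\cdot \cup T)$ is essentially bookkeeping.
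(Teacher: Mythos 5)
Your proof is correct. Every step checks out: the directional concavity of the multilinear extension along non-negative directions (diagonal second partials vanish by multilinearity, off-diagonal ones are non-positive by submodularity) is a standard and valid fact; applying it to $g(S)=f(S\cup R_{\bar T})$ conditionally on $R_{\bar T}$ gives $G(p\cdot 1_T)\geq (1-p)f(R_{\bar T})+p\,f(T\cup R_{\bar T})$; and Lemma~\ref{lem:BFNS} applied to $h(\cdot)=f(\cdot\cup T)$ legitimately absorbs the remaining randomness since each element of $A\setminus T$ lands in $R_{\bar T}$ with probability exactly (hence at most) $p$. Note, however, that the paper does not prove this lemma at all --- it is imported verbatim as Theorem~2.1 of \cite{FMV-SICOMP11} --- so there is no in-paper proof to match; the closest comparison is the self-contained appendix proof of the generalization Lemma~\ref{lem:low-high}, which shares your two-phase skeleton (harvest value from the good set, then show the elements outside it cost only a $(1-H)$ factor) but executes the first phase by an element-by-element telescoping induction over $S^*$ using submodularity of marginals, rather than by concavity of $F$ along a ray, and handles the second phase by a bespoke worst-set argument rather than by invoking Lemma~\ref{lem:BFNS}. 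Your route is arguably cleaner and more modular: the concavity step replaces the induction in one line, and reusing Lemma~\ref{lem:BFNS} avoids re-deriving the "adding low-probability elements doesn't hurt much" fact. The original \cite{FMV-SICOMP11} argument is different again (it goes through their Lemma~2.2, stated here as Lemma~\ref{lem:FMV-doubleLem}), so your proof is a genuinely independent and valid derivation.
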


We can now prove the following useful variant of the previous two lemmata
(see also an alternative self-contained proof in Appendix~\ref{sec:missingProofs}).

\else
We also need the following lemma, which is inspired by similar lemmata of Feige, Mirrokni, and Vondrak~\cite{FMV-SICOMP11}.
 \fi
\begin{lem}\label{lem:low-high}
Consider any non-negative submodular function $f$  and $0 \leq L \leq H \leq 1$. Let $S^*$ be a set that maximizes $f$, and let $\x \in [0,1]^n$ be such that for all $i \in [n]$, $L \leq x_i \leq H$.
Then, \[F(\x) \geq  L(1-H)\cdot f(S^*).\]
\end{lem}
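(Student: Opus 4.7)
My plan decomposes a random set $R \sim \x$ as $R = R_1 \cup R_2$ with $R_1 \triangleq R \cap S^*$ and $R_2 \triangleq R \setminus S^*$, which are independent since $\x$ has independent coordinates. The two factors $(1-H)$ and $L$ in the bound will each be produced by a separate application of Lemma~\ref{lem:BFNS} to an appropriately shifted submodular function derived from $f$.

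For the factor $(1-H)$, I would condition on $R_1$ and look at the shifted function $g_{R_1}(T) \triangleq f(R_1 \cup T)$ for $T \subseteq [n] \setminus S^*$, which remains submodular. Each $i \in [n] \setminus S^*$ lies in $R_2$ with probability $x_i \leq H$ independently, so Lemma~\ref{lem:BFNS} applied to $g_{R_1}$ gives $\E[g_{R_1}(R_2) \mid R_1] \geq (1-H)\, g_{R_1}(\emptyset) = (1-H)\, f(R_1)$. Averaging over $R_1$ yields $F(\x) \geq (1-H)\,\E[f(R_1)]$.

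For the factor $L$, I would apply a complement trick: define $g'(W) \triangleq f(S^* \setminus W)$ for $W \subseteq S^*$. With $W \triangleq S^* \setminus R_1$, each $i \in S^*$ lies in $W$ with probability $1 - x_i \leq 1 - L$ independently, so Lemma~\ref{lem:BFNS} applied to $g'$ gives $\E[f(R_1)] = \E[g'(W)] \geq (1-(1-L))\, g'(\emptyset) = L\, f(S^*)$. Multiplying the two bounds then gives $F(\x) \geq L(1-H)\, f(S^*)$, completing the proof.

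The only non-routine step, and the mild obstacle of the argument, is verifying that $g'$ is submodular. For any $U \subseteq V \subseteq S^*$ and $i \in S^* \setminus V$ this reduces to the inequality $f(S^* \setminus U) - f((S^* \setminus U) \setminus i) \leq f(S^* \setminus V) - f((S^* \setminus V) \setminus i)$, which says that the marginal of $i$ in the larger set $S^* \setminus U$ is at most its marginal in the smaller set $S^* \setminus V$; this is just the diminishing-marginals form of the submodularity of $f$, so the obstacle is immediately dispatched.
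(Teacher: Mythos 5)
Your proof is correct, and it takes a genuinely different route from the paper's. The paper constructs the random set $T \sim \x$ in two stages — first a set $T'$ sampling every coordinate at rate exactly $L$, then a booster set $\tilde{T}$ sampling at rates $(x_i - L)/(1-L)$ — and bounds the first stage by the Feige--Mirrokni--Vondr\'{a}k inequality $F(L \cdot 1_{[n]}) \geq L(1-L) f(S^*)$ (Lemma~\ref{lem:FMV-double}), then uses Lemma~\ref{lem:BFNS} once to show the second stage loses only a $(1-H)/(1-L)$ factor. You instead split the sampled set along $S^*$ versus its complement and invoke Lemma~\ref{lem:BFNS} twice: once on $g_{R_1}(T) = f(R_1 \cup T)$ to peel off the $(1-H)$ factor from elements outside $S^*$, and once on the reflected function $g'(W) = f(S^* \setminus W)$ — whose submodularity you correctly verify via diminishing returns, with non-negativity inherited from $f$ — to peel off the $L$ factor from elements inside $S^*$. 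Your route is somewhat more self-contained, as it avoids the FMV double-sampling lemma entirely and leans only on the single sampling lemma of Buchbinder et al.; the paper's version instead keeps the two factors decoupled from $S^*$ at the cost of invoking the stronger FMV bound. Both arguments also implicitly rely on the independence of the coordinate draws — in your case so that the conditional law of $R_2$ given $R_1$ still has marginals at most $H$ — which is guaranteed since $R_1$ and $R_2$ depend on disjoint coordinates of $\x$. The paper additionally gives a third, fully elementary inductive proof in the appendix; yours sits between that and the main-text two-stage proof in terms of machinery used.
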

\ifFULL
\begin{proof}
Imagine a process in which in which we construct the random set $T \sim \x$,
i.e. set $T$ containing each element $e$ independently w.p. $x_e$, in two steps.
In the first step we construct set $T'$ by selecting every element independently w.p. exactly $L$. In the second step we construct set $\tilde{T}$  containing each element $e$  independently  w.p. $(x_e - L)/(1-L)$.
It's easy to verify that the union of the two sets $T' \cup \tilde{T}$ contains each element $e$ independently with  probability exactly $x_e$.

From Lemma~\ref{lem:FMV-double}, we know that at the end of first step, the generated set has
expected value  $\E[f(T')] \geq L (1-L) f(S^*)$. Now, we argue that the second step does not ``hurt'' the value by a lot. We note that in the second step each element is added w.p. at most $(H - L)/(1-L)$  because $x_e \leq H$. 
Let $g(S) := f(S \cup T')$  be a non-negative submodular function. We apply Lemma~\ref{lem:BFNS} on $g$ to get $\E[g(\tilde{T})] \geq (1-H)/(1-L)\cdot g(\emptyset) $, which implies $\E[f(T' \cup \tilde{T})] \geq (1-H)/(1-L)\cdot \E[f(T')]$.

Together,  we get  $F(\x) = \E[T' \cup \tilde{T}] \geq  L (1-L) (1-H)/(1-L) f(S^*) =
L (1-H) f(S^*)$.
\end{proof}
\else
\fi

\subsubsection{Online Contention Resolution Schemes}\label{sec:OCRS}

Given a point $\x$ in the matroid polytope $P$ of matroid $\mathcal{M}$,  many submodular maximization  applications   like to select each element $i$ independently with probability $x_i$ and claim that the selected set $S$ has expected value $F(\x)$~\cite{CVZ14-CRS}. The difficulty is that $S$ need not be feasible in $\mathcal{M}$, and we can only select  $T \subseteq S$ that is feasible. Chekuri et al.~\cite{CVZ14-CRS} introduced the notion of \emph{contention resolution schemes} (CRS) that describes how, given a random $S$, one can find a feasible $T\subseteq S$ such that the expected value  $f(T)$ will be close to $F(\x)$.

Recently, Feldman, Svensson, and Zenklusen gave {\em online contention resolution schemes} (OCRS). Informally, it says that the decision of whether to select element $i \in S$  into $T$ can be made online, even before knowing the entire set $S$~\cite{FSZ16-OCRS}. 
In particular, we will need their definition of {\em greedy} OCRS; we define it below and state the results from \cite{FSZ16-OCRS} that we use in our $O(1)$-submodular prophet inequality result over matroids.

\begin{defn}[Greedy OCRS] Let $\x$ belong to a matroid polytope $P$ and  $S \sim \x$. A greedy OCRS defines a downward-closed family $\cal{F}_{\x}$ of  feasible  sets in the matroid. All elements reveal one-by-one if they belong to $S$, and  when element $i \in [n]$ reveals, the greedy OCRS selects it if, together with the already selected elements, the obtained set is in $\cal{F}_{\x}$.
\end{defn}

\ignore{
\begin{defn}[$(b,c)$-selectable] For $0 \leq b,c \leq 1$, a greedy OCRS is $(b,c)$-selectable if for any $\x \in b\cdot P $, given that some element $i \in S$, where $S\sim \x$, it selects  $i $ with probability at least $c$. 
%
\end{defn}

\begin{lem}[Theorem $1.8$ of~\cite{FSZ16-OCRS}]\label{lem:OcrsExist}
There exists a $(\frac12,\frac12 )$-selectable deterministic greedy OCRS for matroid polytopes. 
\end{lem}

\begin{lem}[Theorem $1.10$ of~\cite{FSZ16-OCRS}]\label{lem:OcrsUseful}
Given a non-negative submodular function $f$ and a $(b,c)$-selectable greedy OCRS for a polytope $P$, applying OCRS to an input $x \in b\cdot P$ results in a random set $T$ satisfying $\E_T[F(1_T/2)] \geq (c/4)\cdot F(x)$.
\end{lem}}

\begin{lem}[Theorems 1.8 and 1.10 of~\cite{FSZ16-OCRS}]\label{lem:OCRS}
Given a non-negative submodular function $f$, a matroid $\mathcal{M}$, and a vector $\x$ in the convex hull of independent sets in $\mathcal{M}$, there exists a deterministic greedy OCRS that outputs a set $T$ satisfying $\E_T[F(1_T/2)] \geq (1/16)\cdot F(x)$.
\end{lem}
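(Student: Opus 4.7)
The statement is really a repackaging of two results of Feldman, Svensson, and Zenklusen, so the plan is simply to compose them while handling the scaling carefully. I would first state (or recall) the two intermediate notions from \cite{FSZ16-OCRS} that the commented-out text in the preliminaries already hints at: a greedy OCRS is \emph{$(b,c)$-selectable} if, given any $\x\in b\cdot P$ and any $i$ that appears in $S\sim\x$, it selects $i$ with probability at least $c$; Theorem~1.8 of \cite{FSZ16-OCRS} produces a \emph{$(1/2,1/2)$-selectable} deterministic greedy OCRS for matroid polytopes; and Theorem~1.10 of \cite{FSZ16-OCRS} states that for a non-negative submodular $f$ and a $(b,c)$-selectable greedy OCRS for $P$, applying it to input $\y\in b\cdot P$ yields a random set $T$ with $\E_T[F(1_T/2)]\geq (c/4)\cdot F(\y)$.

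Given an $\x$ in the matroid polytope $P$, the plan is to run the OCRS not on $\x$ itself but on the shrunken point $\y\triangleq \x/2$, which lies in $(1/2)\cdot P$ and is therefore a valid input to the $(1/2,1/2)$-selectable greedy OCRS of Theorem~1.8. Plugging $b=c=1/2$ and $\y=\x/2$ into Theorem~1.10 immediately gives a deterministic greedy OCRS that outputs $T$ with
\[
\E_T\bigl[F(1_T/2)\bigr]\ \geq\ \tfrac{1}{8}\cdot F(\x/2).
\]

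The only remaining step, and the place where one must be a little careful, is to relate $F(\x/2)$ back to $F(\x)$. For a non-negative submodular function the multilinear extension $F$ is concave along any non-negative direction (its Hessian has zero diagonal and non-positive off-diagonal entries), so in particular it is concave along the ray from the origin through $\x$. Concavity together with $F(\mathbf{0})=f(\emptyset)\geq 0$ yields
\[
F(\x/2)\ \geq\ \tfrac{1}{2}F(\x)+\tfrac{1}{2}F(\mathbf{0})\ \geq\ \tfrac{1}{2}F(\x).
\]
Combining the two displayed inequalities gives $\E_T[F(1_T/2)]\geq (1/16)\,F(\x)$, as desired.

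There is no real obstacle here beyond bookkeeping: the two cited theorems do all of the heavy lifting, and the only subtle point is remembering that the price of using a $(1/2,1/2)$-selectable scheme (which forces us to feed it $\x/2$ rather than $\x$) is paid by a further factor of $1/2$, justified by concavity of $F$ on the ray from $\mathbf{0}$. If we instead wanted to avoid invoking concavity, the same factor can be recovered from Lemma~\ref{lem:low-high} applied with $L=0$ and $H=1/2$ on the support of $\x$, but the concavity argument is cleaner and needs nothing beyond non-negativity and submodularity of $f$.
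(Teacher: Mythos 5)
Your main argument is correct, and it matches what the paper intends. The paper itself states this lemma as a direct consequence of Theorems~1.8 and~1.10 of \cite{FSZ16-OCRS} without writing out the composition; you spell it out exactly as the (commented-out) intermediate statements in the preliminaries suggest: feed $\x/2\in\tfrac12 P$ to the $(1/2,1/2)$-selectable greedy OCRS of Theorem~1.8, apply Theorem~1.10 with $b=c=1/2$ to get $\E_T[F(1_T/2)]\geq \tfrac18 F(\x/2)$, and finally bridge $F(\x/2)\geq\tfrac12 F(\x)$. The concavity step is correct and, as you note, uses nothing beyond submodularity: the multilinear extension has zero diagonal and non-positive off-diagonal Hessian entries, so it is concave along any non-negative ray, and $F(\x/2)\geq \tfrac12 F(\x)+\tfrac12 F(\mathbf 0)\geq \tfrac12 F(\x)$.

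The closing aside is off, though. Lemma~\ref{lem:low-high} has the form $F(\y)\geq L(1-H)\cdot \max_{T}f(T)$ for $\y$ with coordinates in $[L,H]$; it lower-bounds $F$ by the \emph{maximum} of $f$, not by $F$ evaluated at a larger argument, so it is not directly a substitute for the concavity step. Moreover, taking $L=0$ makes the bound trivially $\geq 0$. Even the natural alternative---drawing $S\sim\x$ and then thinning each element with probability $1/2$, so $F(\x/2)=\E_{S\sim\x}[F(1_S/2)]\geq \tfrac14\E_{S\sim\x}[f(S)]=\tfrac14 F(\x)$ by Lemma~\ref{lem:low-high} with $L=H=1/2$---only gives a factor $1/4$, which would degrade the final constant to $1/32$. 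So concavity is not merely cleaner; it is what actually yields the stated $1/16$.
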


\subsection{Subadditive and Downward-Closed Preliminaries} 
 A set function $f: \{0,1\}^n \rightarrow \mathbb{R}_+$ is {\em subadditive} if for all $S,T \subseteq [n]$ it satisfies $f(S\cup T) \leq f(S) + f(T)$. It's monotone if $f(S) \leq f(T)$ for any $S \subseteq T$.
 A  set function $f: \{0,1\}^n \rightarrow \mathbb{R}_+$ is an {\em XOS}  (alternately, {\em fractionally subadditive}) function if there exist linear functions $L_i: \{0,1\}^n \rightarrow \mathbb{R}_+$ such that $f(S) = \max_i \{L_i(S) \}$. (See Feige~\cite{Feige-SICOMP09} for illustrative examples.)

\begin{lem}
[{\cite[Lemma 3]{Dobzinski07-subadditive-vs-XOS}}]\label{lem:dobzinski}
Any subadditive function $v:2^{n}\rightarrow\mathbb{R}_{+}$ can be
$\left(\frac{\log\left|S\right|}{2e}\right)$-approximated by an XOS
function $\widehat{v}:2^{n}\rightarrow\mathbb{R}_{+}$; i.e. for every
$S\subseteq\left[n\right]$,
\[
\widehat{v}\left(S\right)\leq v\left(S\right)\leq\left(\frac{\log\left|S\right|}{2e}\right)\widehat{v}\left(S\right).
\]
 Furthermore, the XOS function has the form $\widehat{v}\left(S\right)=\max_{T\subseteq\left[n\right]}p_{T}\cdot\left|T\cap S\right|$
for an appropriate choice of $p_{T}$'s.\end{lem}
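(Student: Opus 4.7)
My plan is to construct the XOS approximant $\widehat v(S)=\max_T p_T\,|T\cap S|$ by specifying weights $p_T$ explicitly and then verifying both inequalities. A natural first attempt is $p_T=v(T)/|T|$, but this fails $\widehat v\le v$ in general: a set $T$ can have $v(T)/|T|$ much larger than $v(S)/|T\cap S|$ for some witness $S$ intersecting $T$. The safe fix is to set $p_T$ to the largest value preserving validity:
\[
p_T \;:=\; \inf_{S'\subseteq[n]\,:\,T\cap S'\neq\emptyset}\ \frac{v(S')}{|T\cap S'|}.
\]
Then $\widehat v(S)=\max_T p_T|T\cap S|\le v(S)$ holds by construction.

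The substance lies in the reverse direction $v(S)\le(\log|S|/2e)\,\widehat v(S)$, for which I need, for each $S$, a single witness $T^\star$ with $p_{T^\star}\cdot|T^\star\cap S|\ge 2e\cdot v(S)/\log|S|$. The plan is a dyadic/greedy construction. First, peel $S$ into successively denser pieces: iteratively take $T_i\subseteq S\setminus\bigcup_{j<i}T_j$ maximizing the density $v(T_i)/|T_i|$; subadditivity gives $v(S)\le\sum_i v(T_i)=\sum_i d_i|T_i|$ with $d_1\ge d_2\ge\cdots$. Then bucket the $T_i$'s by density into $O(\log|S|)$ dyadic classes; by pigeonhole some class contributes $\Omega(v(S)/\log|S|)$ to the total $\sum_i d_i|T_i|$, and I take $T^\star$ to be the best representative of that class.

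The main obstacle is bounding $p_{T^\star}$ from below in terms of $v(T^\star)/|T^\star|$. The infimum defining $p_{T^\star}$ could in principle be spoiled by a tiny low-density subset of $T^\star$, or by a set $S'$ that partially straddles $T^\star$. To prevent the former, I would refine the choice of $T^\star$ to be inclusion-maximal among subsets of its chosen density bucket; an exchange argument then shows that every $U\subseteq T^\star$ retains density within a factor of two of $v(T^\star)/|T^\star|$ (otherwise one of $U$ or $T^\star\setminus U$ would already sit in a strictly higher dyadic bucket, contradicting the extremal choice). To handle a general $S'$ one splits $S'=(S'\cap T^\star)\sqcup(S'\setminus T^\star)$ and invokes subadditivity to reduce to the sub-case $S'\subseteq T^\star$, where the refined density-uniformity just established does the job. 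Balancing the $\log|S|$ loss from the dyadic bucketing against the constant slack from the density-uniformity step pins down the precise $\log|S|/2e$ factor—the constant $2e$ emerges as the optimum of this trade-off.
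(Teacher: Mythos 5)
The paper does not reprove this lemma — it cites Dobzinski's result as a black box — so there is no internal proof to compare against. Your framework (set $p_T$ to the largest valid uniform price, then exhibit a single witness $T^\star$ with $p_{T^\star}|T^\star|$ large) is the right shape, and the opening definition of $p_T$ as an infimum is correct. But the mechanism you propose for finding $T^\star$ has a genuine gap.

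You peel off \emph{maximum}-density pieces. This gives, for any $U \subseteq T_i \subseteq R_{i-1}$, only the \emph{upper} bound $v(U)/|U| \leq d_i$, since $T_i$ maximizes density in $R_{i-1}$. To lower-bound $p_{T^\star}=\inf_{S': T^\star\cap S'\neq\emptyset} v(S')/|T^\star\cap S'|$ you need the opposite: a \emph{lower} bound on the density of every (nonempty) subset of $T^\star$. The exchange argument you offer to repair this — that an inclusion-maximal representative of a dyadic density class cannot contain a subset $U$ of density below $d/2$, lest $T^\star\setminus U$ jump to a higher bucket — does not go through. From subadditivity and $v(U)<d|U|/2$ one only gets
\[
\frac{v(T^\star\setminus U)}{|T^\star\setminus U|}\;>\;d\cdot\frac{1-|U|/(2|T^\star|)}{1-|U|/|T^\star|},
\]
which exceeds $2d$ only when $|U| \geq \tfrac{2}{3}|T^\star|$; for small $U$ it is barely above $d$, so there is no contradiction. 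And indeed the claimed factor-of-two density uniformity is simply false: a monotone subadditive $v$ (say unit demand) can assign value $0$ to some singleton inside $T^\star$, making $p_{T^\star}=0$ while $v(T^\star)/|T^\star|$ is large. A secondary issue: taking a single ``best representative'' $T_i$ from the winning bucket does not capture the bucket's total contribution $\sum d_i|T_i|$, since the $|T_i|$ within the bucket can be very uneven.

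The fix is to reverse the peeling. Iteratively remove a \emph{minimum}-density subset $T_i$ of the remaining ground set $R_{i-1}$, so that $d_1\leq d_2\leq\cdots$ and, crucially, every $U\subseteq R_{i-1}$ has $v(U)/|U|\geq d_i$. Then take $T^\star=R_{\ell-1}=T_\ell\cup\cdots\cup T_k$ for the best $\ell$: the minimality of $T_\ell$ gives $p_{T^\star}\geq d_\ell$ directly, with no exchange argument needed. The existence of a good $\ell$ follows from $v(S)\leq\sum_i d_i|T_i|$ plus an Abel-summation/telescoping bound: setting $A_\ell=\sum_{i\geq\ell}|T_i|$ and $C=\max_\ell d_\ell A_\ell$, one has $\sum_i d_i|T_i|\leq C\sum_i(A_i-A_{i+1})/A_i\leq C(\ln|S|+1)$, hence $C\geq v(S)/(\ln|S|+1)$. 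This replaces both the dyadic bucketing and the exchange argument by one clean inequality, and it is the step your proposal is missing.
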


\begin{thm} [{\cite{Rub16-downward_closed}}] \label{thm:additive_secretary}
When items take values in $\left\{ 0,1\right\}$,
there  are $O\left(\log n\right)$-competitive algorithms for (additive) {\sc Downward-Closed Secretary} and {\sc Downward-Closed Prophet}.
\end{thm}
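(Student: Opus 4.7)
My approach is to reduce, at an $O(\log n)$ cost in the competitive ratio, to a ``nearly uniform'' instance and then solve that instance with an $O(1)$-competitive threshold/greedy policy. I sketch the prophet version; the secretary version is analogous.

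\textbf{Bucketing by probability.} Let $p_i = \Pr[X_i = 1]$. Partition the items into $O(\log n)$ buckets $B_k = \{i : p_i \in (2^{-k-1},2^{-k}]\}$ for $k = 0,1,\dots,\lceil\log n\rceil$, together with a tail bucket of items with $p_i < 1/n$ whose total contribution to $\mathrm{OPT}$ is at most $\sum_{i:p_i<1/n} p_i \le 1$ and can be discarded. Writing $A = \{i : X_i = 1\}$ and using that $\mathcal{F}$ is downward-closed (so $S^{\star}\cap B_k \in \mathcal{F}$ for every $S^{\star}\in\mathcal{F}$),
\[
  \mathrm{OPT} \;=\; \E_A\bigl[\max_{S\in\mathcal{F}}|S\cap A|\bigr] \;\le\; \sum_k \E_A\bigl[\max_{S\in\mathcal{F}}|S\cap A\cap B_k|\bigr],
\]
so some single bucket $k^{\star}$ contributes $\Omega(\mathrm{OPT}/\log n)$. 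The algorithm commits up front to a uniformly random bucket and ignores items outside it, losing an $O(\log n)$ factor. For the secretary version the same partition is estimated from an initial $\Theta(1)$-fraction sample, and Chernoff-style bounds show the estimates are accurate enough.

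\textbf{Within a bucket.} Inside $B_{k^{\star}}$ every surviving $p_i$ lies in $[p,2p]$ for some $p$. The core step is to prove an $O(1)$-competitive prophet inequality for $\{0,1\}$-valued items with near-uniform $p_i$ subject to downward-closed $\mathcal{F}$. My candidate is \emph{diluted greedy}: on arrival of $i$, if $X_i = 1$ and adding $i$ keeps the accepted set in $\mathcal{F}$, accept $i$ independently with probability $\tfrac{1}{2}$. To analyze, I would couple the algorithm's run with a hindsight optimum $S^{\star}\in\arg\max_{S\in\mathcal{F}}|S\cap A|$ and show that each $i\in S^{\star}$ is accepted with $\Omega(1)$ probability: near-uniformity makes $|A\cap B_{k^{\star}}|$ concentrate, and the $1/2$-dilution guarantees that prior commitments have not yet destroyed any near-maximum feasible subset of $A$ that contains $i$.

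\textbf{Main obstacle.} The hard step is the within-bucket analysis: unlike matroids, a downward-closed $\mathcal{F}$ has no exchange property, so an early wrong commitment can permanently block many later optimal items. The leverage I would exploit is that in the $\{0,1\}$ near-uniform regime the benchmark concentrates around $\min\bigl(r_{k^{\star}},\,\E|A\cap B_{k^{\star}}|\bigr)$, a quantity attained by \emph{many} distinct near-maximal feasible subsets of $A\cap B_{k^{\star}}$; the dilution ensures that at least one such target set remains reachable after each step, so the algorithm's greedy behavior still recovers a constant fraction of OPT within the bucket. For the secretary variant, the additional randomness of the arrival order lets the sample-based thresholding replace knowledge of the $p_i$'s at only a further constant-factor cost.
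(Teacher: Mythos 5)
This theorem is imported verbatim from \cite{Rub16-downward_closed}; the present paper does not prove it but uses it as a black box (and proves a closely related generalization, the XOS $\{0,1\}$ case, in Section~\ref{sub:XOS-with-01} via the same technique as \cite{Rub16-downward_closed}). Your bucketing reduction is fine and is a standard preprocessing step, but the within-bucket step is where the real content lives, and your candidate ``diluted greedy'' is provably not $O(1)$-competitive for general downward-closed $\mathcal{F}$ --- the gap you flag as the ``main obstacle'' is in fact fatal. Concretely, take all $p_i=1/2$ (a single bucket), and let $\mathcal{F}$ consist of all singletons together with all subsets of a hidden contiguous block $T^*$ of size $r$ appearing near the end of the adversarial order. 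Diluted greedy will, with constant probability per early item, accept a decoy singleton outside $T^*$ and then be permanently blocked, achieving value $1$ against an offline optimum of $\Theta(r)$. The intuition that ``many near-maximal feasible subsets remain reachable'' fails precisely because a general downward-closed family has no exchange property; a single wrong acceptance can collapse the reachable feasible region to $\emptyset$.

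The approach in \cite{Rub16-downward_closed} (mirrored in the paper's Section~\ref{sub:XOS-with-01}) avoids this by \emph{never} accepting greedily. It maintains a moving target value $\tau$ and the conditional probability $\pi$ that the offline optimum, restricted to be consistent with selections so far, can still beat $\tau$. An item-outcome pair is accepted only if it is ``good,'' meaning accepting it degrades $\pi$ by at most an $n^{2}$ factor; otherwise the algorithm waits, and if $\Pr[\text{some good outcome occurs}]$ drops below a constant it lowers $\tau$ by $1$, which (Lemma~\ref{lem:main} analogue) doubles $\pi$. Balancing the $O(\log n)$ drop in $\log\pi$ per accepted item against the unit gain per decrement of $\tau$ yields the $O(\log n)$ ratio. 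In your counterexample, accepting a decoy singleton would send $\pi$ to $0$, so the decoy is classified as a bad outcome and never taken --- exactly the information that a greedy/thresholding rule on item values alone cannot see. If you want to salvage your plan, you would need to replace diluted greedy with a rule that conditions on feasibility of the \emph{future}, which is essentially the potential-function argument.
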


\section{Correlation Gap for non-monotone submodular functions}\label{sec:pf-cor-gap}

For monotone submodular functions, \cite{CCPV07-correlation_gap} proved that 
\begin{gather} \label{eq:monotone-gap}
F(\x) \geq (1-1/e) f^+(\x). 
\end{gather}
This result was later rediscovered by \cite{ADSY-OR12}, who called the ratio between $f^+(\x)$ and $F(\x)$ {\em correlation gap}.
It's useful in many applications since it says that up to a constant factor, picking items independently is as good as the best correlated distribution with the same element marginals.

What is the correct generalization of \eqref{eq:monotone-gap} to non-monotone submodular functions?
It is tempting to conjecture that $F(\x) \geq c\cdot f^+(\x)$ for some constant $c>0$.
However, the following example shows that even for a function as simple as the directed cut function on a two-vertex graph, this gap may be unbounded.
\begin{example} \label{ex:naive-corr-gap}
Let $f$ be the directed cut function on the two-vertex graph $u \rightarrow v$; i.e. $f(\emptyset) = 0$, $f(\{u\}) = 1$, $f(\{v\}) = 0$, and $f(\{u,v\}) = 0$.
Let $\x = (\epsilon, 1-\epsilon)$. Then, 
\[F(\x) = \epsilon^2 \ll \epsilon = f^+(\x).\]
\end{example}

It turns out that the right way to generalize \eqref{eq:monotone-gap} to non-monotone submodular functions is to first make them monotone:

\begin{defn} [$f_{\max}$]
\[ f_{\max}(S) \triangleq \max_{T \subseteq S} f(T).\]
\end{defn}
For non-monotone submodular $f$, we have that $f_{\max}$ is monotone, but it may no longer be submodular, as shown by the following example:
\begin{example} [$f_{\max}$ is not submodular]
Let $f$ be the directed cut function on the four-vertex graph $u \rightarrow v \rightarrow w \rightarrow x$. 
In particular,  $f(\{v\}) = 1$, $f(\{u,v\}) = 1$,  $f(\{v,w\}) = 1$, and $f(\{u,w\}) = 2$.
\[f_{\max}(\{u,v\}) - f_{\max}(\{v\})  = 1 - 1 < 2 -1 = f_{\max}(\{u,v,w\}) - f_{\max}(\{v,w\}).\]
\end{example}

Finally, we are ready to define correlation gap for non-monotone functions:

\begin{defn} [Correlation gap] \label{def:correlation-gap}
The {\em correlation gap} of any set function $f$ is
\[ \max_{\x \in [0,1]^n}  \max_{\alpha \geq 0  } \left\{ \frac{ f^+(\x) }{ F_{\max} (\x) } \Bigl\vert \sum_S \alpha_S = 1 \text{ and } \sum_S \alpha_S 1_S = \x \right\}, 
\]
where $F_{\max}$ is the multilinear extension of $f_{\max}$.
\end{defn}

Notice that for monotone $f$, we have that $f_{\max} \equiv f$, so  Definition~\ref{def:correlation-gap} generalizes the correlation gap for monotone submodular functions.
Furthermore, one could  replace $f^+$  with $f^+_{\max}$ in Definition~\ref{def:correlation-gap}; observe that the resulting definition is equivalent.

\begin{thm}[Non-monotone correlation gap]\label{thm:correlation-gap_formal}
For any (non-monotone non-negative) submodular function $f$, the  correlation gap is at most $200$.
\end{thm}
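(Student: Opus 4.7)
The plan is to prove $f^+(\x) \leq O(1) \cdot F_{\max}(\x)$ for every $\x \in [0,1]^n$; the constant $200$ in the statement will absorb the slacks accumulated along the way. I begin by splitting the coordinates into a heavy part $H = \{i : x_i \geq 1/2\}$ and a light part $L = \{i : x_i < 1/2\}$, writing $\x_H, \x_L$ for the corresponding zero-padded restrictions. Two simple reductions let me focus on the pure cases: (a)~$f^+(\x) \leq f^+(\x_H) + f^+(\x_L)$, which follows by taking the optimal convex combination $\sum_S \alpha_S \mathbf{1}_S = \x$, writing each $S = (S \cap H) \cup (S \cap L)$ as a disjoint union, and invoking the elementary inequality $f(A \cup B) \leq f(A) + f(B)$ for disjoint $A,B$ (which for non-negative submodular $f$ is immediate from $f(A)+f(B) \geq f(A\cup B) + f(\emptyset) \geq f(A\cup B)$); and (b)~$F_{\max}(\x) \geq F_{\max}(\x_H)$ and $F_{\max}(\x) \geq F_{\max}(\x_L)$, from monotonicity of $f_{\max}$ via the coupling $T_H \subseteq T$ for $T \sim \x, T_H \sim \x_H$. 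It therefore suffices to bound the ratio separately in the two pure cases.

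For the \emph{heavy} case (support $\subseteq H$), I sample $T \sim \x$ and sub-sample $T' \subseteq T$ by independently retaining each element of $T$ with probability $1/2$, so $T' \sim \x/2$. Monotonicity of $f_{\max}$ gives $f_{\max}(T) \geq f_{\max}(T') \geq f(T')$, hence $F_{\max}(\x) \geq F(\x/2)$. Applied to $f$ restricted to $H$ (on which $\x/2$ has coordinates in $[1/4,1/2]$), Lemma~\ref{lem:low-high} yields $F(\x/2) \geq \tfrac{1}{4}\cdot\tfrac{1}{2}\cdot f_{\max}(H) = \tfrac{1}{8} f_{\max}(H)$. Since every $S$ in the support of $\alpha$ is contained in $H$ when $\x$ is pure-heavy, we have $f^+(\x) \leq f_{\max}(H)$, and so $F_{\max}(\x) \geq \tfrac{1}{8} f^+(\x)$.

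The \emph{light} case (support $\subseteq L$) is the main obstacle. Now $\min_i x_i$ may be vanishingly small, so sub-sampling by a uniform rate and invoking Lemma~\ref{lem:low-high} on $\x/2$ directly degenerates. My plan is to further subdivide $L$ dyadically as $B_k = \{i : x_i \in (2^{-k-1}, 2^{-k}]\}$ for $k \geq 1$; inside each bucket $\max_i x_i / \min_i x_i \leq 2$, so Lemma~\ref{lem:low-high} applied to $\x_{B_k}$ gives $F(\x_{B_k}) \geq 2^{-k-2} f_{\max}(B_k) \geq 2^{-k-2} f^+(\x_{B_k})$, and monotonicity of $f_{\max}$ promotes this to $F_{\max}(\x) \geq 2^{-k-2} f^+(\x_{B_k})$. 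The delicate step is combining the per-bucket bounds: a naive summation against $f^+(\x) \leq \sum_k f^+(\x_{B_k})$ loses a $\Theta(\log n)$ factor because the per-bucket estimate degrades geometrically in $k$. Overcoming this requires extra structure---e.g., a randomized selection of a single ``dominant'' bucket that already accounts for a constant fraction of $f^+(\x)$, together with a charging argument that uses the submodularity of $f$ to absorb the other buckets' contributions. This is where I expect the bulk of the technical work to lie, and is what the paper's loose constant $200$ accommodates.
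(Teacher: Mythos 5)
Your two reductions are sound: (a) the disjoint-subadditivity of non-negative submodular $f$ gives $f^+(\x) \leq f^+(\x_H) + f^+(\x_L)$, and (b) monotonicity of $f_{\max}$ with the obvious coupling gives $F_{\max}(\x) \geq F_{\max}(\x_H), F_{\max}(\x_L)$. Your heavy-case argument is also correct: subsample $T' \subseteq T$ at rate $1/2$ to get $F_{\max}(\x_H) \geq F(\x_H/2)$, then apply Lemma~\ref{lem:low-high} on the restriction to $H$ with $L=1/4$, $H=1/2$ to conclude $F_{\max}(\x_H) \geq \tfrac{1}{8} f_{\max}(H) \geq \tfrac{1}{8} f^+(\x_H)$.

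The problem is that the light case, which you flag as ``where the bulk of the technical work lies,'' is not merely the harder half --- it \emph{is} the theorem. When $\min_i x_i$ is unbounded below, Lemma~\ref{lem:low-high} degenerates, and your dyadic bucketing trades the degeneracy for a $\log(1/\min_i x_i)$ loss with no clear way to collapse it to $O(1)$: the per-bucket bound is $F_{\max}(\x) \geq 2^{-k-2} f^+(\x_{B_k})$, and there is no reason a single bucket carries a constant fraction of $f^+(\x)$ --- a monotone coverage instance can spread $f^+$ evenly across $\Theta(\log n)$ buckets. You would have to ``charge'' across buckets using submodularity, but you give no mechanism for doing so, and no such mechanism is apparent.

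The paper does not split into heavy and light at all. It introduces a new relaxation $f^*_{1/2}(\x)$, a randomized variant of CCPV's continuous relaxation $f^*$ in which the deterministic set $S$ in the minimization is replaced by a half-density random subsample $T \sim 1_S/2$, and proves the chain $f^+(\x) \leq 4\,f^*_{1/2}(\x) \leq 200\,F(\x/2) \leq 200\,F_{\max}(\x)$. The first inequality (Lemma~\ref{lem:gstaratleast}) uses the same Lemma~\ref{lem:low-high}-at-$L=H=1/2$ step that powers your heavy case; the last is exactly your observation $F(\x/2) \leq F_{\max}(\x)$. The missing ingredient is the middle inequality (Lemma~\ref{lem:gstaratmost}), which is proved by a continuous-time exponential-clock process that grows a random set $S(t)$, tracks $V(t) = \E_{T\sim 1_{S(t)}/2}[f(T)]$, and derives the differential inequality $\tfrac{d}{dt}\E[V(t)] \geq \tfrac12(f^*_{1/2}(\x) - \E[V(t)])$ --- the direct analogue of CCPV's Lemma~5 argument. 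It is this differential inequality, not any bucketing, that gives an $O(1)$ bound uniformly in $\x$, including when marginals are arbitrarily small. Without it, your plan does not close.
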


While the constant can be improved slightly, we have not tried to optimize it, focusing instead on clarity of exposition.
Our proof goes through a third relaxation, $f^*_{1/2}$. 

\begin{defn} [$f^*_{1/2}$] \label{defn:gstar} 

For any set function $f$ and any $\x\in [0,1]^n$, 
\begin{align*} 
 f^*_{1/2}(\x) \triangleq  \min_{S \subseteq [n]} \bigg\{ \E_{T \sim 1_S/2} \big[f(T) + \sum_{i \in [n] \setminus S} f_T(e)\cdot x_i \big] \bigg\}. 
\end{align*}
\end{defn}

Below, we will prove (Lemma \ref{lem:gstaratleast}) that $f^+(\x) \leq 4 \cdot f^*_{1/2}(\x)$.
We then show (Lemma \ref{lem:gstaratmost}) that $f^*_{1/2}(\x) \leq 50 \cdot F(\x/2)$, which implies
\begin{gather}\label{eq:x/2-correlation_gap}
f^+(\x) \leq 4 \cdot f^*_{1/2}(\x) \leq 200 \cdot F(\x/2) .
\end{gather}
 
Finally, to finish the proof of Theorem~\ref{thm:correlation-gap_formal}, it suffices to show that $F(\x/2) \leq F_{\max}(\x)$. This is easy to see since drawing $T$ according to $\x/2$ is equivalent to drawing $S$ according to $\x$, and then throwing out each element from $S$ independently with probability $1/2$. For $F_{\max}(\x)$, on the other hand, we draw the same set $S$ and then take the optimal subset.

\subsection{Proof that $f^+(\x) \leq 4 \cdot f^*_{1/2}(\x)$}

\begin{lem}\label{lem:gstaratleast}
For any $\x\in [0,1]^n$ and non-negative submodular function $f: \{0,1\}^n \rightarrow \mathbb{R}_+$,
\[ f^+(\x) \leq 4 \cdot f^*_{1/2}(\x).
\]\end{lem}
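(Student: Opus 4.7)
My plan is to prove the per-element inequality
\[ f(R) \;\le\; 4\,\E_{T\sim 1_S/2}\Big[f(T) + \sum_{i\in R\setminus S} f_T(i)\Big] \qquad (\star) \]
for every set $R$ and every set $S$, and then use $(\star)$ together with the identity $x_i = \sum_R \alpha_R\,\mathbf{1}[i\in R]$ and $\sum_R \alpha_R = 1$ to average out over any convex decomposition $\x = \sum_R \alpha_R \cdot 1_R$. Concretely, plugging these two facts into the definition of $f^*_{1/2}(\x)$ (after fixing any candidate $S$) turns $\sum_{i\notin S} f_T(i)\,x_i$ into $\sum_R \alpha_R \sum_{i\in R\setminus S} f_T(i)$, and $\E_T[f(T)]$ into $\sum_R \alpha_R\,\E_T[f(T)]$. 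Summing $(\star)$ with weights $\alpha_R$ then yields $\sum_R \alpha_R\,f(R) \le 4\,\E_T\big[f(T) + \sum_{i\notin S} f_T(i)\,x_i\big]$, and taking the max on the left and the min on the right gives the lemma.

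To prove $(\star)$, I will split $R = A \cup B$ where $A := R\cap S$ and $B := R\setminus S$. Since $T\subseteq S$ we have $B\cap T = \emptyset$, so listing $B = \{b_1,\dots,b_k\}$ and telescoping, submodularity gives
\[ \sum_{i\in B} f_T(i) \;\ge\; \sum_{j=1}^{k} f_{T\cup\{b_1,\dots,b_{j-1}\}}(b_j) \;=\; f(T\cup B) - f(T), \]
so $f(T) + \sum_{i\in B} f_T(i) \ge f(T\cup B)$ pointwise in $T$. Taking expectations, it now suffices to show
\[ f(R) \;\le\; 4\,\E_{T\sim 1_S/2}\bigl[f(T\cup B)\bigr]. \]

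The key move is to introduce the shifted function $g(U) := f(U\cup B)$ on $U\subseteq [n]\setminus B$. Since $f$ is non-negative and submodular, so is $g$ (the submodular inequality for $g$ reduces to that for $f$ on the translated sets $U\cup B$). Now $\E_{T\sim 1_S/2}[f(T\cup B)] = \E_{T\sim 1_S/2}[g(T)] = G(1_S/2)$, and Lemma~\ref{lem:FMV-double} applied to $g$ with set $S$ and $p = 1/2$ gives
\[ G(1_S/2) \;\ge\; \tfrac14\,\max_{T'\subseteq S} g(T') \;\ge\; \tfrac14\, g(A) \;=\; \tfrac14\, f(A\cup B) \;=\; \tfrac14\, f(R), \]
where the middle inequality uses $A = R\cap S \subseteq S$. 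This yields $(\star)$ with the claimed constant $4$. The main conceptual obstacle is realizing that one should first absorb the marginal sum into $f(T\cup B)$ via submodular telescoping, and only then apply the FMV-style ``$p(1-p)$'' lower bound to the shifted function $g$; the factor $4 = 1/(p(1-p))\big|_{p=1/2}$ is precisely where all of the loss comes from.
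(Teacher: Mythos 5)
Your proof is correct and takes essentially the same route as the paper. The paper first establishes Claim~\ref{cla:aux}, namely $\E_{T_{1/2}\sim 1_{S^*}/2}\bigl[f\bigl((S\setminus S^*)\cup T_{1/2}\bigr)\bigr]\geq \tfrac14 f(S)$, by applying (the $L=H=1/2$ case of) Lemma~\ref{lem:low-high} to the shifted function $h(U)=f((S\setminus S^*)\cup U)$ — exactly your $g$ — and then obtains the per-set inequality by expanding $f((S\setminus S^*)\cup T)=f(T)+f_T(S\setminus S^*)\leq f(T)+\sum_{i\in S\setminus S^*}f_T(i)$; averaging against the optimal $\{\alpha_S\}$ and the minimizing $S^*$ finishes it. You prove the identical per-element inequality $(\star)$ directly, going in the reverse direction (telescope the marginals down to $f(T\cup B)$, then invoke the FMV bound), and then do the same weighted averaging. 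The two arguments are the same proof read forwards and backwards; the constant $4$ arises in both from the $p(1-p)\rvert_{p=1/2}$ loss in the shifted-function bound.
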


We first prove the following auxiliary claim:
\begin{claim} \label{cla:aux}
For any sets $S,T \subseteq [n]$,
\[ \E_{T_{1/2} \sim 1_T/2} [f((S \setminus T) \cup T_{1/2}] \geq \frac14 f(S). \]
\end{claim}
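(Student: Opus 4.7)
The plan is to reduce Claim~\ref{cla:aux} to a single application of Lemma~\ref{lem:FMV-double} on a ``shifted'' submodular function on the ground set $T$. Set $A \triangleq S \setminus T$ and $C \triangleq S \cap T$, so that $S = A \cup C$ with $A \cap C = \emptyset$ and $C \subseteq T$. Define $g : 2^{T} \to \mathbb{R}_+$ by $g(X) \triangleq f(A \cup X)$ for $X \subseteq T$. Since $A$ is disjoint from $T$, for all $X_1, X_2 \subseteq T$ we have $A \cup (X_1 \cap X_2) = (A \cup X_1) \cap (A \cup X_2)$ and $A \cup (X_1 \cup X_2) = (A \cup X_1) \cup (A \cup X_2)$, so the submodularity inequality for $f$ on the pair $A \cup X_1,\, A \cup X_2$ transfers verbatim to $g$ on $X_1, X_2$; non-negativity of $g$ is inherited from $f$.

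Next I would apply Lemma~\ref{lem:FMV-double} to $g$, viewed as a non-negative submodular function with ground set $T$, taking the ``$A$'' of that lemma to be $T$ and $p \triangleq 1/2$. Writing $G$ for the multilinear extension of $g$, this yields
\[
  G\bigl(\tfrac{1}{2}\cdot 1_T\bigr) \;\geq\; \tfrac{1}{2}\cdot\tfrac{1}{2}\cdot \max_{X \subseteq T} g(X) \;\geq\; \tfrac{1}{4}\, g(C),
\]
where the last step uses $C \subseteq T$. Unpacking both sides,
\[
  G\bigl(\tfrac{1}{2}\cdot 1_T\bigr) \;=\; \E_{T_{1/2}\sim 1_T/2}\bigl[f\bigl((S\setminus T)\cup T_{1/2}\bigr)\bigr], \qquad g(C) \;=\; f(A \cup C) \;=\; f(S),
\]
which is precisely the inequality asserted by the claim.

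The argument is essentially a one-step reduction, so I do not expect a real obstacle. The only thing to double-check is that freezing the deterministic part $A = S \setminus T$ inside $f$ produces a genuinely submodular function of the random part $T_{1/2} \subseteq T$, which is immediate from the disjointness of $A$ and $T$. After that, the factor $\tfrac{1}{4}$ in the claim is nothing but $p(1-p)$ evaluated at $p = \tfrac{1}{2}$ in Lemma~\ref{lem:FMV-double}. If one preferred to avoid Lemma~\ref{lem:FMV-double} altogether (matching the self-contained alternative the paper promises in the appendix), one could instead pair each realization $R \subseteq T$ of $T_{1/2}$ with its complement $T \setminus R$ inside $T$ and apply submodularity of $g$ to $g(R) + g(T \setminus R)$ to extract the same $1/4$ factor, but invoking Lemma~\ref{lem:FMV-double} is the cleanest route given the lemmas already stated in this section.
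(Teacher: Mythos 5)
Your proof is correct and follows essentially the paper's own approach: define the auxiliary non-negative submodular function $h(U) \triangleq f((S\setminus T)\cup U)$ (your $g$), apply an FMV-type bound at $p=\tfrac12$ to its multilinear extension at $\tfrac12\cdot 1_T$, and observe that the resulting $\tfrac14\max_{U\subseteq T}h(U)$ already dominates $\tfrac14 f(S)$. The only stylistic difference is that you invoke Lemma~\ref{lem:FMV-double} directly, whereas the paper routes through its Lemma~\ref{lem:low-high} with $L=H=\tfrac12$; these give the same $p(1-p)=\tfrac14$ factor here. In fact your last step is slightly more careful than the paper's displayed chain: you explicitly take $C = S\cap T\subseteq T$ and note $g(C)=f(A\cup C)=f(S)$, whereas the paper writes ``$\geq \tfrac14 h(T)$'' and then ``$=\tfrac14 f(S)$'' --- but $h(T)=f(S\cup T)$, not $f(S)$; the subset the paper intends (and that its commented-out alternative proof uses) is exactly $S\cap T$, as in your argument.
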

\begin{proof}
Define a new auxiliary function $h(U) \triangleq f((S \setminus T) \cup U)$. Observe that $h$ continues to be non-negative and submodular.
We now have,
\begin{align*}
\E_{T_{1/2} \sim 1_T/2} [f((S \setminus T) \cup T_{1/2}] 
& = \E_{T_{1/2} \sim 1_T/2} [h(T_{1/2})] \\
& \geq \frac14 h(T) && \text{(Lemma~\ref{lem:low-high} for $L=H=1/2$)}\\
& = \frac14 f(S) && \text{(Definition of $h$)}.
\end{align*}

\ignore{
Define also $A \triangleq T \cap S$ and $B \triangleq T \setminus S$.
We can now rewrite, 
\begin{gather*}
\E_{T_{1/2} \sim 1_T/2} [f((S \setminus T) \cup T_{1/2}] = \E_{T_{1/2} \sim 1_T/2} [h(T_{1/2})] 
 = \E_{ \substack{S \sim 1_A/2 \\T \sim 1_B/2}} [h(S \cup T)] .
\end{gather*}
Applying Lemma \ref{lem:FMV-doubleLem}, we get: 
\begin{align*}
\E_{T_{1/2} \sim 1_T/2} [f((S \setminus T) \cup T_{1/2}] & \geq \frac14 \left( h(\emptyset) + h(A) + h(B) + h(A \cup B)\right)  \\
& \geq \frac14 h(A) && \text{($h$ is non-negative)}\\
& = \frac14 f((S \setminus T) \cup A) && \text{(Definition of $h$)}\\
& = \frac14 f(S) && \text{(Definition of $A$).}
\end{align*}
}

\end{proof}

\begin{proof}[Proof of Lemma \ref{lem:gstaratleast}]
Fix $\x$, and let $S^* = S^*(\x)$ denote the optimal set that satisfies $f^*_{1/2}(\x) =  \E_{T \sim 1_{S^*}/2} \big[f(T) + \sum_{i \in [n] \setminus S^*} f_T(e) x_i \big] $. 
Let $\{\alpha_S\}$ be the optimal distribution that satisfies $f^+(\x) = \sum_S \alpha_S f(S)$. 
Then, $\frac 14  f^+(\x) = \frac 14 \sum_S \alpha_S f(S) $
\begin{align*}
&\leq  \sum_S \alpha_S \cdot \E_{T \sim 1_{S^*}/2} [ f((S\setminus S^*)\cup T) ] && \text{(Claim \ref{cla:aux})} \\
&=  \sum_S \alpha_S \cdot \E_{T \sim 1_{S^*}/2} \left[ f(T) + f_T(S \setminus S^*) \right] \\
&\leq  \sum_S \alpha_S \cdot \E_{{T \sim 1_{S^*}/2}} \left[ f(T) + \sum_{i \in S\setminus S^*} f_T(e) \right]  && \text{(submodularity)}\\
&=   \E_{{T \sim 1_{S^*}/2}} \left[f(T) \sum_S \alpha_S   +  \sum_S \alpha_S \sum_{i \in S\setminus S^*} f_T(e)  \right]   \\
&= \E_{{T \sim 1_{S^*}/2}} \left[  f(T) +   \sum_{i \in [n]\setminus S^*} f_T(e) x_i \right]  = f^*_{1/2} (\x)	&&  \text{(using $\sum_S \alpha_S 1_S = \x$)}.
\end{align*}
\end{proof}

\subsection{Proof that $f^*_{1/2}(\x) \leq 50 \cdot F(\x /2)$}

The proof of the following lemma is similar to Lemma $5$ in~\cite{CCPV07-correlation_gap}. \ifFULL \else (See full version). \fi

\begin{lem}\label{lem:gstaratmost}.
\[   f^*_{1/2}(\x) \leq 50 \cdot F\left( \x / 2\right).
\]
\end{lem}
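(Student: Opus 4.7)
The plan is to follow the strategy of Lemma~5 in~\cite{CCPV07-correlation_gap} adapted to the non-monotone setting. Since $f^*_{1/2}(\x)$ is a minimum over subsets $S \subseteq [n]$, it suffices to exhibit a distribution over $S$ such that the expected value of the defining expression is at most $50 \cdot F(\x/2)$; averaging then yields a deterministic $S$ achieving at least this value.

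The natural choice is to sample $S \subseteq [n]$ by including each $i$ independently with probability $x_i$. Then $T \sim 1_S/2$ has independent element marginals $x_i/2$, so the first term satisfies
\[
\E_S\bigl[\E_{T \sim 1_S/2}[f(T)]\bigr] \;=\; F(\x/2).
\]
For the linear term, conditioning on $i \notin S$ gives
\[
\E_S\bigl[\mathbf{1}[i \notin S]\,\E_{T \sim 1_S/2}[f_T(i)]\bigr] \;=\; (1-x_i)\, \E_{T \sim (\x/2)_{-i}}[f_T(i)],
\]
where $(\x/2)_{-i}$ zeros out the $i$-th coordinate. A short calculation shows
\[
x_i\, \E_{T \sim (\x/2)_{-i}}[f_T(i)] \;=\; 2\, \E_{R \sim \x/2}\bigl[\mathbf{1}[i \in R]\,\bigl(f(R)-f(R\setminus i)\bigr)\bigr],
\]
so the total second term is $2\sum_i (1-x_i)\, \E_R[\mathbf{1}[i \in R](f(R) - f(R \setminus i))]$.

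Next, I would bring the sum inside the expectation and apply the submodularity-based telescoping identity: for any ordering of the elements of $R$,
\[
\sum_{i \in R}\bigl(f(R)-f(R\setminus i)\bigr) \;\leq\; \sum_{i \in R} f_{R_{i-1}}(i) \;=\; f(R)-f(\emptyset) \;\leq\; f(R),
\]
where the first inequality uses submodularity ($R_{i-1} \subseteq R\setminus i$). Taking expectations over $R \sim \x/2$ yields the bound $O(F(\x/2))$ in the monotone case, so the whole approach closes quickly when marginal contributions are non-negative.

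The main obstacle is exactly the non-monotonicity: the individual quantities $\E_R[\mathbf{1}[i \in R](f(R)-f(R\setminus i))]$ can be negative, so the $(1-x_i)$ factors cannot simply be dropped to invoke the telescoping bound. To control the potentially large positive contributions that remain after the telescoping fails to absorb the negative ones, I would split the sum into positive and negative parts and upper bound the positive contributions using Lemma~\ref{lem:low-high} (or Lemma~\ref{lem:FMV-double}), which show that $F(\cdot)$ evaluated at a truncated point dominates a constant fraction of $f$ on any witness set. Concretely, by shifting to a random $S$ whose inclusion probabilities are padded into $[x_i,\,1/2]$, each coordinate of the resulting $T$ lies in an interval where Lemma~\ref{lem:low-high} gives $F(\x/2) \geq \Omega(1) \cdot f(S^*)$ for any candidate witness $S^*$, so every problematic positive marginal can be charged back to $F(\x/2)$. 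Carefully tracking the losses from (i) padding the marginals, (ii) the constant in Lemma~\ref{lem:low-high}, and (iii) the telescoping step is what inflates the final constant from the ``morally correct'' $O(1)$ to the stated $50$.
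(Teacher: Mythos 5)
Your approach is genuinely different from the paper's proof, and up to a point it is correct. The paper adapts the argument of Lemma~5 in~\cite{CCPV07-correlation_gap}: it grows $S$ continuously via independent exponential clocks at rates $x_i$, tracks $V(t)=\E_{T \sim 1_{S(t)}/2}[f(T)]$, derives the differential inequality $\frac{d}{dt}\E[V(t)] \geq \frac{1}{2}\bigl(f^*_{1/2}(\x) - \E[V(t)]\bigr)$, and finally relates $\E[V(1)]$ to $F(\x/2)$ through an auxiliary multilinear extension and Lemma~\ref{lem:low-high}, losing a constant factor at each step (hence $50$). Your discretization --- choosing $S$ at random in one shot with $\Pr[i\in S]=x_i$, so that the first term is exactly $F(\x/2)$ and the linear term equals $2\,\E_{R\sim\x/2}\bigl[\sum_{i\in R}(1-x_i)\bigl(f(R)-f(R\setminus i)\bigr)\bigr]$ --- is more elementary, and your algebra to that point checks out.

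The proposal stalls at the step you flag. You are right that one cannot simply drop the $1-x_i$ factors when marginals can be negative, and you are right that the fix is to split into positive and negative parts; but the concrete route you then sketch (padding the inclusion probabilities of $S$ and charging positive marginals via Lemma~\ref{lem:low-high}) is not carried out and does not look like it will close: changing $\Pr[i\in S]$ breaks the identity $\E_S\E_{T\sim 1_S/2}[f(T)]=F(\x/2)$, and Lemma~\ref{lem:low-high} controls $f$ at its global maximizer rather than the per-$R$ positive marginals you need. In fact no additional lemma is required --- the split alone finishes the argument. Fix $R$ and set $N=\{i\in R:\, f(R)-f(R\setminus i)<0\}$, $P=R\setminus N$. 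For $i\in N$ each term $(1-x_i)\bigl(f(R)-f(R\setminus i)\bigr)$ is nonpositive and can be discarded. For $i\in P$ the factor $1-x_i\leq 1$ can be dropped; then order $P=\{i_1,\dots,i_\ell\}$ arbitrarily and write $A_j=N\cup\{i_1,\dots,i_j\}$. Since $A_{j-1}\subseteq R\setminus i_j$, submodularity gives $f(R)-f(R\setminus i_j)=f_{R\setminus i_j}(i_j)\leq f_{A_{j-1}}(i_j)$, and the right-hand sides telescope to $f(R)-f(N)\leq f(R)$ by non-negativity. Altogether,
\[
\sum_{i\in R}(1-x_i)\bigl(f(R)-f(R\setminus i)\bigr)\;\leq\; f(R),
\]
so the linear term is at most $2F(\x/2)$ and $f^*_{1/2}(\x)\leq 3\,F(\x/2)$. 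The one missing idea was to run the telescoping with the nonnegative-marginal elements appended after $N$; this both closes your argument and improves the paper's constant from $50$ to $3$.
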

\ifFULL
\begin{proof}
Consider an exponential clock running for each element $i \in [n]$ at rate $x_i$. Whenever the clock triggers, we update set $S$ to $S \cup \{i\}$. For $t \in [0,1]$, let $S(t)$ denote the set of elements in $S$ by time $t$. Thus, each element belongs to $S(1)$ w.p. $1 - \exp(-x_i)$, which is between $x_i (1- \frac1e)$ and $x_i$. Let $V(t) \triangleq \E_{T \sim 1_{S(t)}/2 } [f(T)]$, i.e. expected value of set that picks each element in $S(t)$ independently w.p. $\frac12$. 
Our goal is to show that:
\begin{gather} \label{eq:goal}  
 f^*_{1/2}(\x) \leq \left(\frac{2}{1-e^{-1/2}} \right) \cdot \E [V(1)] \leq \left(\frac{2}{1-e^{-1/2}} \right) \left(\frac{4(e-1)}{e-2} \right) \cdot  F\left( \x / 2\right).
\end{gather}

We begin with the second inequality of \eqref{eq:goal}. 
Consider the auxiliary submodular function $g(S) \triangleq \E_{T \sim 1 - \exp(-\x)} [f(S \cap T)]$, and let $G$ denote its multilinear extension.
Let $S^*$ be a maximizer of $g$, and observe that 
\begin{gather*} 
V(1) = G(1_{[n]}/2) \leq g(S^*).
\end{gather*}
Observe further that, with slight abuse of notation, $F(\x / 2) = G\left(\frac{\x/2}{1 - \exp(-\x)} \right)$; 
this is well-defined 
since for any $x_i \in [0,1]$, we have 
\begin{gather*}
\frac12 \leq  \frac{x_i/2}{1 - \exp(-x_i)} \leq  \frac{e}{2(e-1)} < 1.
\end{gather*}
Moreover, since  $\frac{x_i/2}{1 - \exp(-x_i)}$ is bounded,  Lemma~\ref{lem:low-high} gives 
\[F(\x / 2) \geq \frac12 \cdot \left(1- \frac{e}{2(e-1)}\right) \cdot g(S^*) = \frac{e-2}{4(e-1)} g(S^*)  =\Omega(1) \cdot g(S^*).\]

We now turn to the first inequality of \eqref{eq:goal}.
Consider an infinitesimal interval interval $(t, t+dt]$. For any $i \notin S(t)$ the exponential clock triggers with probability $x_i \; dt$, so it contributes to $V(t+dt)$ with probability $x_i/2 \; dt$. The probability that two clocks trigger in the same infinitesimal is negligible ($O(dt^2)$). Therefore,
\begin{align*} \E[ V(t+dt) - V(t)]  & = \E_{S(t)} \E_{T \sim 1_{S(t)}/2 }  \left[ \sum_{j \in [n]\setminus S} \frac{x_i}{2} f_T(j) \; dt  \right]  -O(dt^2)\\
& \geq  \frac{1}{2} \Big(f^*_{1/2}(\x) - \underbrace{\E_{S(t)} \E_{T \sim 1_{S(t)}/2 } \E[f(T)]}_{\E[ V(t) ]}\Big)\; dt -O(dt^2).
\end{align*}
Dividing both sides by $dt$ and taking the limit as $dt \rightarrow 0$, we get:
\begin{gather*}
\frac{d}{dt} \E[ V(t) ]  \geq  \frac{1}{2} \Big(f^*_{1/2}(\x) - \E[ V(t) ]\Big).
\end{gather*}

To solve the differential inequality, let $\phi(t) =  \E[V(t) ] $ and $\psi(t) = \exp(\frac{t}{2})\, \phi(t)$. We get $\frac{d\phi}{dt} \geq \frac12 (f^*_{1/2}(\x) - \phi(t))$ and $\frac{d\psi}{dt} = \exp(\frac{t}{2}) (\frac{d\phi}{dt} + \frac{\phi(t)}{2}) \geq \exp(\frac{t}{2})\,\frac{f^*_{1/2}(\x)}{2}$. Since $\psi(0) = \phi(0) = 0$, integration over $t$ gives
\[ \E[V(t)] = \phi(t) = \exp(-t/2) \,\psi(x) \geq \frac{f^*_{1/2}(\x)}{2} (1 - \exp(-t/2) ).
\]
In particular, plugging in $t=1$ completes the proof of the first inequality in \eqref{eq:goal}.
\end{proof}
\else
\fi

\section{Submodular Prophets over Matroids}\label{sec:ProphetMatroid}
\begin{defn} [{\sc Submodular Matroid Prophet}]
The offline inputs to the problem are: 
\begin{itemize}
\item $n$ sets $U_1, \dots ,U_n$; we denote their union $U \triangleq \bigcup_{i=1}^n U_i$;
\item a (not necessarily monotone) non-negative submodular function $f:\{0,1\}^{U} \rightarrow \mathbb{R}_+$;
\item $n$ distributions $\mathcal{D}_i$ over subset $U_i$; and
\item a matroid $\cal{M}$ over $[n]$
\end{itemize}
On the $i$-th time period, the algorithm observes an element $X_i \in U_i$ drawn according to $\mathcal{D}_i$, independently from outcomes and actions on past and future days. The algorithm must decide (immediately and irrevocably) whether to add $i$ and $X_i$ to sets $W$ and $X_W$, respectively, subject to  $W$ remaining independent in $\cal{M}$. 
The objective is to maximize $f(X_W)$.
\end{defn}  
\begin{thm}\label{thm:submod-prophet-matroid}
There is a randomized algorithm with a competitive ratio of $O(1)$ for any
{\sc Submodular Matroid Prophet}
 \end{thm}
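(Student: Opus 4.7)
My plan is to follow the standard ``relax fractionally offline, round online via OCRS'' template, combining the non-monotone correlation gap (Theorem~\ref{thm:correlation-gap_formal}) with the greedy OCRS for matroids (Lemma~\ref{lem:OCRS}).

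First I set up the fractional relaxation. Let $U\triangleq\bigcup_i U_i$, and define the ``blown-up'' matroid $\widetilde{\mathcal{M}}$ on $U$ by declaring $T\subseteq U$ independent iff $|T\cap U_i|\le 1$ for every $i$ and $\{i:T\cap U_i\ne\emptyset\}$ is independent in $\mathcal{M}$; a routine exchange argument (using that $|T|=|\{i:T\cap U_i\ne\emptyset\}|$ for any such $T$, combined with exchange in $\mathcal{M}$) verifies this is a matroid. Let $W^*(\X)$ denote the prophet's choice, and set $x^*_u\triangleq\Pr_{\X}[u\in X_{W^*(\X)}]$ and $p_u\triangleq\Pr[X_i=u]$ for $u\in U_i$. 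Since the prophet's random choices $(W^*(\X),X_{W^*(\X)})$ are supported on independent sets of $\widetilde{\mathcal{M}}$, we have $\x^*\in P(\widetilde{\mathcal{M}})$; moreover $x^*_u\le p_u$, and by definition of the concave closure as the maximum over distributions with given marginals, $f^+(\x^*)\ge \mathrm{OPT}$. Applying the stronger intermediate inequality~\eqref{eq:x/2-correlation_gap} from the proof of the non-monotone correlation gap (with $f$ viewed on ground set $U$) then gives
\[
F(\x^*/2)\;\ge\;\tfrac{1}{200}\,f^+(\x^*)\;\ge\;\mathrm{OPT}/200.
\]

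The algorithm proceeds in two stages. Offline I compute $\widetilde{\x}\in P(\widetilde{\mathcal{M}})$ with $\widetilde{x}_u\le p_u$ and $F(\widetilde{\x}/2)=\Omega(\mathrm{OPT})$; since $\x^*$ itself is feasible and achieves $F(\x^*/2)\ge\mathrm{OPT}/200$ by the display above, any constant-factor approximation for non-monotone submodular maximization over a matroid polytope intersected with a box (e.g.\ measured continuous greedy) suffices. Online, I run the greedy OCRS of Lemma~\ref{lem:OCRS} for $\widetilde{\mathcal{M}}$ with target vector $\widetilde{\x}/2$, maintaining a bookkeeping set $T\subseteq U$: on day $i$, upon seeing $X_i=u$, I flip an independent coin with success probability $\widetilde{x}_u/(2p_u)\le 1/2$ to decide whether to ``propose'' $u$; if proposed and the OCRS-feasibility family allows $T\cup\{u\}$, I accept $u$ into $T$. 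Irrevocable commitment uses a second independent fair coin: day $i$ is added to the output set $W$ (and $X_i$ to the committed set $T'$) iff $u$ was accepted into $T$ and the fair coin comes up heads. Downward-closedness of $\widetilde{\mathcal{M}}$ ensures $T'\subseteq T\in\widetilde{\mathcal{M}}$, so $W$ is independent in $\mathcal{M}$. By Lemma~\ref{lem:OCRS},
\[
\E[f(T')]\;=\;\E[F(1_T/2)]\;\ge\;\tfrac{1}{16}F(\widetilde{\x}/2)\;=\;\Omega(\mathrm{OPT}),
\]
which is the claimed $O(1)$-competitive guarantee (concretely at most $200\cdot 16=3200$, unoptimized).

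The main technical subtlety is that Lemma~\ref{lem:OCRS} is stated for a proposal process in which each $u\in U$ enters the candidate set \emph{independently} with probability $\widetilde{x}_u/2$, whereas in my online implementation the proposal events within a single day $U_i$ are disjoint: at most one $u\in U_i$ can ever be proposed, since $X_i$ is one random element. This coupling is favorable: disjoint within-day events are negatively correlated across $U$, so the probability that an already-accepted element blocks a future proposal can only decrease relative to the independent model, and the greedy OCRS's marginal-acceptance analysis carries through with the same constants. Equivalently, since $\widetilde{\mathcal{M}}$ itself forbids two items from the same day, the greedy OCRS would reject all but one element of $U_i$ even in the fully independent model, so our within-day restriction simply matches the OCRS's effective behavior. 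Verifying this coupling step, together with carrying out the offline constant-factor maximization over the box-constrained matroid polytope, is the hard part of the argument; each step loses only a constant multiplicative factor, so the overall competitive ratio stays $O(1)$.
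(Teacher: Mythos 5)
Your high-level plan matches the paper's proof exactly: upper-bound $\mathrm{OPT}$ by $f^+$ of a vector in the blown-up matroid polytope, convert to $F(\cdot/2)$ via the non-monotone correlation gap (Ineq.~\eqref{eq:x/2-correlation_gap}), and then round online with the greedy OCRS of Lemma~\ref{lem:OCRS}. The setup of $\widetilde{\mathcal{M}}$, the choice of $\x^*$ as the prophet's marginals, the bound $f^+(\x^*)\ge\mathrm{OPT}$, and the second fair coin to match $\E_T[F(1_T/2)]$ are all sound; in fact your use of the prophet's own marginals and an offline maximizer is a slightly cleaner way to frame the upper bound than the paper's sketch.

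The gap is in the coupling step, and it is the heart of the theorem, not a ``technical subtlety'' to be waved through. You feed the greedy OCRS a \emph{non-product} input: on day $i$ at most one element of $U_i$ is ever proposed, which is not a distribution the OCRS's selectability guarantee has been proved for. Your claim that ``disjoint within-day events are negatively correlated, so blocking can only decrease'' is backwards. Disjointness \emph{raises} the probability that some element of $U_i$ is proposed: for two elements $a,b\in U_i$ with proposal probabilities $q_a,q_b$, independence gives $\Pr[\text{at least one}] = q_a+q_b-q_aq_b$, while your disjoint model gives $q_a+q_b$. So from the viewpoint of any element on a later day, day $i$ is \emph{more} likely to contribute something to the OCRS's accepted set, which can only hurt that later element's chance of being accepted. (Concretely: rank-$1$ matroid, $U_1=\{a,b\}$ with $q_a=q_b=1/4$, $U_2=\{c\}$; in the independent model $c$ is unblocked with probability $9/16$, in your model only $1/2$.) So the $(b,c)$-selectability guarantee does not transfer; a new analysis of the OCRS under this input distribution would be needed, and you have not provided one. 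Your parenthetical ``the OCRS would reject all but one element of $U_i$ anyway'' does not rescue this, because it addresses what the OCRS does after multiple proposals arrive but not the changed probability that \emph{any} proposal from the day arrives at all.

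The paper resolves exactly this issue in the opposite way: it ensures the OCRS is fed the \emph{genuine} product distribution $\x/2$ by sometimes ``padding'' day $i$ with a random non-singleton subset $T_i\subseteq U_i$ drawn from the conditional law of $\x/2$ given $|T_i|\ne 1$, and it feeds the singleton $\{X_i\}$ only with the appropriate complementary probability. The algorithm can commit to $(i,j)$ only when $T_i=\{(i,j)\}$ and the OCRS accepts; the factor of $2$ in $\x/2$ is precisely chosen so that, conditioned on $(i,j)\in T_i$, the singleton event has probability at least $1/2$. Greediness of the OCRS then implies that, conditioned on the OCRS selecting $S_{\mathrm{OCRS}}$, the committed set $T_{\mathrm{ALG}}\subseteq S_{\mathrm{OCRS}}$ retains each element with probability at least $1/2$, and Lemma~\ref{lem:BFNS} converts this into $\E[f(T_{\mathrm{ALG}})]\ge\tfrac12\E[f(S_{\mathrm{OCRS}})]$. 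That padding-plus-conditioning argument is the missing piece in your writeup.
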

 
\paragraph{Proof overview} The main ingredients in the proof of Theorem \ref{thm:submod-prophet-matroid} are known {\em online contention resolution schemes} (OCRS) due to Feldman, Svensson, and Zenklusen~\cite{FSZ16-OCRS}, and our new bound on the {\em correlation gap} for non-monotone submodular functions (Theorem~\ref{thm:correlation-gap_formal}).

Let $\x \in [0,1]^{U}$ denote the vector of probabilities that each element realizes (i.e. $x_{(i,j)} = \mathcal{D}_i(j)$). A naive proof plan proceeds as follows:
Select elements online using the OCRS (w.r.t $\x$); obtain a constant factor approximation to $F(\x)$; use a ``correlation gap'' to show a constant factor approximation of $f^+(\x)$; finally, observe that $f^+(\x)$ is an upper bound on $OPT$.

There are two problems with that plan: First, the OCRS of Feldman et al. applies when elements realize independently.
The realization of different elements for the same day is obviously correlated (exactly one element realizes), so we cannot directly apply their OCRS.
The second problem is that for non-monotone submodular function, it is in general not true that $F(\x)$ approximates $f^+(\x)$ (see Example \ref{ex:naive-corr-gap}). 

The solution to both obstacles is working with $\x/2$ instead of $\x$. 
In Section~\ref{sec:pf-cor-gap} we showed that $F(\x/2)$ is a constant factor approximation of $f^+(\x)$ (Ineq. \eqref{eq:x/2-correlation_gap}). 
Then, 
in Subsection \ref{sec:use-OCRS}, we give an algorithm that approximates the selection of the greedy OCRS on $\x/2$.
Our plan is then to show:
\begin{align*}
ALG & = \Omega(\E_{S \sim OCRS(\x/2)}[f(S)]) && \text{(Subsection \ref{sec:use-OCRS})} \\
	& = \Omega(F(\x/2)) && \text{(Lemma \ref{lem:OCRS})} \\
	& = \Omega (f^+(\x)) && \text{(Ineq. \eqref{eq:x/2-correlation_gap})} \\
	& = \Omega (OPT).
\end{align*}

\subsection{Applying the OCRS to our setting}\label{sec:use-OCRS}

In this subsection we show an algorithm that obtains, in expectation, $1/2$ of the expected value of the OCRS with probabilities $\x/2$.


Our algorithm uses the greedy OCRS  as a black box. On each day, the algorithm (sequentially) feeds the OCRS a subset of the elements $U_i$ that can potentially arrive on that day. The subset on each day is chosen at random; it is correlated with the element that actually arrives on that day, and independent from the subsets chosen on other days.
The guarantee is that the distribution over sequences fed into the OCRS is identical to the distribution induced by $\x/2$.

\subsubsection*{Reduction}

For each $i$, let $U_i$ denote the set of elements that can arrive on day $i$, and fix some (arbitrary) order over $U_i$.  
For a subset $S_i \subseteq U_i$, let $P_{\x/2}^i(S_i)$ denote the probability that the set $S_i$ is exactly the outcome of sampling from $U_i$ according to $\x/2$.
When element $(i,j)$ arrives on day $i$, the algorithm feeds into the OCRS a random set $T_i$ drawn from the following distribution. With probability $\frac{P_{\x/2}^i(\{(i,j)\})}{x_{i,j}}$, the algorithm feeds just element $(i,j)$, i.e.  $T_i = \{(i,j)\}$; notice that this guarantees $ \Pr \left[T_i = \{(i,j)\} \right]= P_{\x/2}^i(\{(i,j)\})$.
Otherwise, the algorithm lets $T_i$ be a random subset of $U_i$, drawn according to $\x/2$, conditioned on $|T_i| \neq 1$. This guarantees that the probability mass on subsets of size $\neq 1$ is also allocated according to $\x/2$.

Now, if the algorithm fed the singleton $\{(i,j)\}$ and the OCRS selected it, then the algorithm also takes $\{(i,j)\}$; otherwise the algorithm does not take $\{(i,j)\}$. (In particular, if $|T_i| \neq 1$, the algorithm ignores the decisions of the OCRS.)

\subsubsection*{Analyzing the reduction}

Observe that on each day the distribution over $T_i$'s is identical to the distribution $P_{\x/2}^i(\cdot)$. Since the $T_i$'s are also independent, it means that the distribution of inputs to the OCRS is indeed distributed according to $\x/2$.

Conditioning on $(i,j)$ is being fed (i.e., with probability $x_{i,j}/2$), $P_{\x/2}^i(\cdot)$  assigns  at least $1/2$ probability to the event where no other element is also being fed (this is precisely the reason we divide $\x$ by $2$):  
\[
\Pr[T_i = \{(i,j)\} \mid  T_i \ni (i,j)] \geq 1/2.
\]


Since the OCRS is greedy, for any history on days $1,\dots,i-1$, if it selects $(i,j)$ when observing set $T_i \ni (i,j)$, it would also select $(i,j)$ when observing only this element on day $i$. 
Furthermore, since the OCRS is only allowed to select one element on day $i$, conditioning on the OCRS selecting $(i,j)$, the future days ($i+1,\dots,n$) proceed independently of whether the algorithm also selected $(i,j)$.
Therefore, conditioning on the greedy OCRS selecting any set $S_{\textrm{OCRS}}$, 
the algorithm selects a subset $T_{\textrm{ALG}} \subseteq S_{\textrm{OCRS}}$ where each element appears with probability at least $1/2$.

Finally, to argue that the algorithm obtains at least $1/2$ of the expected value of the set selected by the OCRS, fix the set $S_{\textrm{OCRS}}$ selected by the OCRS, and consider the submodular function
$g(\bar{T}) \triangleq f(S_{\textrm{OCRS}} \setminus \bar{T})$. Setting $\bar{T} \triangleq T_{\textrm{ALG}} \setminus S_{\textrm{OCRS}}$, we have that $f(T_{\textrm{ALG}}) = g(\bar{T})$.
Thus by Lemma \ref{lem:BFNS}, 
\begin{gather*}
\E[f(T_{\textrm{ALG}})] \geq \frac{1}{2} \E[g(\emptyset)] =  \frac{1}{2}\E[f(S_{\textrm{OCRS}})]. \qed
\end{gather*}


\section{Subadditive Secretary over Downward-Closed Constraints}\label{sec:Secretary}

\begin{defn} [{\sc Monotone Subadditive Downward-Closed Secretary}] 
Consider $n$ items, a monotone subadditive valuation function from subsets of items to $\mathbb{R}_+$, and an arbitrary downward-closed set system
${\cal F}$ over the items; both $f$ and ${\cal F}$
are adversarially chosen. 
The algorithm receives as input $n$ (but
not ${\cal F}$ or $f$). The items arrive in
a uniformly random order. Initialize $W$ as the empty set. When item
$i$ arrives, the algorithm observes all feasible subsets of items that have already arrived, and their valuation in $f$.
The algorithm then decides (immediately and irrevocably) whether to add $i$ to the set $W$, subject to the constraint that $W$ remains a feasible set in ${\cal F}$. The goal is to maximize $f(W)$. 
\end{defn}

\begin{thm}
\label{thm:subadditive-secretary}There is a deterministic algorithm
for {\sc Monotone Subadditive Downward-Closed Secretary} that achieves
a competitive ratio of $O\left(\log n\cdot\log^{2}r\right)$.
\end{thm}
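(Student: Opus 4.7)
The strategy is to compose two reductions, losing one logarithmic factor at each step, and finally invoke Theorem~\ref{thm:additive_secretary} as a black box. First, apply Dobzinski's Lemma~\ref{lem:dobzinski} to the monotone subadditive $v$ to obtain an XOS function $\widehat v(S)=\max_{T\subseteq[n]} p_T\cdot|T\cap S|$ that pointwise $O(\log r)$-approximates $v$ on any feasible set (of size $\le r$). It therefore suffices to give a deterministic $O(\log n\cdot\log r)$-competitive algorithm whose objective is the XOS function $\widehat v$.

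For the XOS case the plan is a ``sample half, then run a 0/1 additive secretary on the other half'' reduction. Deterministically, the algorithm earmarks the first $\lfloor n/2\rfloor$ time slots as a pure observation phase: it notes the arrived set $A$, never selects, and can fully evaluate $\widehat v$ and $\mathcal F|_A$ on subsets of $A$. At the end of the sampling phase, the algorithm brute-force computes a pair $(T^\star_A, W_A)$ maximizing $p_T\cdot|T\cap W|$ over all clauses $T\subseteq[n]$ and all feasible $W\in\mathcal F$ with $W\subseteq A$, and fixes the ``target clause'' $T^\star_A$ together with its scalar $p_{T^\star_A}$. In the remaining $\lceil n/2\rceil$ slots the algorithm invokes the deterministic $O(\log n)$-competitive 0/1 downward-closed secretary algorithm from Theorem~\ref{thm:additive_secretary}, feeding it the items from $[n]\setminus A$ with values $\mathbb{1}[i\in T^\star_A]$ and feasibility system $\mathcal F$; the resulting feasible $W\subseteq[n]\setminus A$ is the output. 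Since this procedure is just a sampling split plus deterministic calls to a deterministic subroutine, the overall algorithm is deterministic.

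For the analysis, let $(T^\star,p_{T^\star})$ be a clause witnessing $\widehat v(\text{OPT})=p_{T^\star}|T^\star\cap\text{OPT}|$. Since the arrival order is uniformly random, the partition $[n]=A\sqcup B$ into halves is uniformly random as well. I would partition the possible values of $|T^\star\cap\text{OPT}|$ into $O(\log r)$ geometric buckets (recall $|T^\star\cap\text{OPT}|\le r$); for the bucket containing the optimal clause, a simple balls-in-bins argument shows that with constant probability \emph{both} $|T^\star\cap\text{OPT}\cap A|$ and $|T^\star\cap\text{OPT}\cap B|$ are a constant fraction of $|T^\star\cap\text{OPT}|$. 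Because $T^\star\cap\text{OPT}\cap A$ is a feasible subset of $A$, this forces the sample-optimum $V_A := p_{T^\star_A}\cdot|T^\star_A\cap W_A|\geq \Omega(\widehat v(\text{OPT})/\log r)$. Plugging this into Theorem~\ref{thm:additive_secretary}, the 0/1 subroutine on $B$ returns a feasible $W$ with $|T^\star_A\cap W|=\Omega(\max_{W'\subseteq B,\, W'\in\mathcal F}|T^\star_A\cap W'|/\log n)$; by symmetry of the random halves, the right-hand side is $\Omega(|T^\star_A\cap W_A|)$ in expectation, yielding $\widehat v(W)\ge p_{T^\star_A}|T^\star_A\cap W|=\Omega(V_A/\log n)=\Omega(\widehat v(\text{OPT})/(\log n\cdot\log r))$. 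Composing with the Dobzinski loss gives the claimed $O(\log n\cdot\log^2 r)$.

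The main obstacle is the sample-phase guarantee $V_A=\Omega(\widehat v(\text{OPT})/\log r)$, particularly when the optimal clause $T^\star$ has very small intersection with $\text{OPT}$ (in the extreme, a single item), since then concentration does not apply directly and one must argue that a single ``witness'' element of $T^\star\cap\text{OPT}$ lands in $A$ with constant probability, and then pay separately over the $O(\log r)$ level buckets to justify that $T^\star_A$ inherits the witness structure of $T^\star$. A secondary, routine technical point is the symmetrization between $W_A$ (feasible inside $A$) and the benchmark $\max_{W'\subseteq B,\, W'\in\mathcal F}|T^\star_A\cap W'|$, handled by exchanging the roles of $A$ and $B$ in the random-half expectation and by downward-closedness of $\mathcal F$.
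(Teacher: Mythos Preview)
Your symmetrization step is a genuine gap. You choose the clause $T^\star_A$ as the one that maximizes $p_T\cdot|T\cap W|$ over feasible $W\subseteq A$; this clause is \emph{adapted to $A$}, so there is no ``symmetry of the random halves'' relating $|T^\star_A\cap W'|$ for $W'\subseteq B$ to $|T^\star_A\cap W_A|$. Concretely, suppose the XOS representation has a clause $T_S$ with $p_{T_S}=1$ for every size-$r$ subset $S\subseteq[n]$, and $\mathcal F$ is the uniform matroid of rank $r$. Then the unique maximizer on the sample is some $T^\star_A\subseteq A$ of size $r$, so $T^\star_A\cap B=\emptyset$ and the $0/1$ subroutine on $B$ sees all-zero values and returns nothing useful. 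More generally, nothing prevents the best clause for the sample half from being concentrated in that half; the ``exchange the roles of $A$ and $B$'' heuristic is invalid precisely because $T^\star_A$ is a function of $A$. Relatedly, your $O(\log r)$ bucketing is misplaced: a direct argument already gives $V_A=\Omega(\widehat v(\mathrm{OPT}))$ with constant probability (no bucketing needed), so the second $\log r$ factor must come from elsewhere.

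The paper avoids this trap by not trying to learn the clause at all. It learns only the \emph{scalar} $p_{T^\star}$ (guessed from an $O(\log r)$ geometric scale anchored at the largest singleton value), and then folds the clause structure into a new downward-closed constraint: $T\in\mathcal F'$ iff $T\in\mathcal F$ and every $S\subseteq T$ satisfies $f(S)\geq p_{T^\star}|S|$. By Lemma~\ref{lem:dobzinski}, the optimal set $T^\star$ lies in $\mathcal F'$, so the $\{0,1\}$ additive secretary on $\mathcal F'$ (all values equal to $1$) returns a large $T^{ALG}\in\mathcal F'$, and membership in $\mathcal F'$ immediately gives $f(T^{ALG})\geq p_{T^\star}|T^{ALG}|$. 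The crucial difference is that $\mathcal F'$ depends only on the guessed scalar and on $f$, not on which half the items fell into, so the second half genuinely contains a large $\mathcal F'$-feasible subset.
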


\begin{proof}
Let $T^{\star}$ be the set chosen by the offline algorithm ($OPT = f(T^{\star})$). By Lemma \ref{lem:dobzinski} there exists a $p_{T^{\star}}$ such that for every $S\subseteq T^{\star}$:
  \begin{align}
  \label{eq:f(S)>pt|S|} f(S) &\geq p_{T^{\star}}\left|S\cap T^{\star}\right|; \\
\label{eq:OPT<pt|T|logr} OPT &= f(T^{\star}) = O\left( p_{T^{\star}} \left|T^{\star}\right| \log\left|T^{\star}\right|\right) = O\left(p_{T^{\star}} \left|T^{\star}\right|\right) \log r.
\end{align}

Assume that we know $p_{T^{\star}}$ (discussed later). 
We define a new feasibility constraint ${\cal F}'$ as follows: a set $T \subseteq [n]$ is feasible in ${\cal F}'$ iff it is feasible in ${\cal F}$ and for every subset $S \subseteq T$, we have $f(S) \geq p_{T^{\star}}\left|S\right|$. Notice that because we also force the condition on all subsets of $T$, ${\cal F}'$ is downward-closed and it does not depend on the order of arrival.

We run the algorithm for $\{0,1\}$-valued (additive) {\sc Downward-Closed Secretary} (as guaranteed by Theorem \ref{thm:additive_secretary}) with feasibility constraint ${\cal F}'$ where all values are $1$.
By \eqref{eq:f(S)>pt|S|}, $T^{\star}$ is feasible in ${\cal F}'$, and by \eqref{eq:OPT<pt|T|logr} $ p_{T^{\star}} \left|T^{\star}\right| = \Omega\left(\frac{OPT}{\log r}\right)$.
Therefore, the additive $\{0,1\}$-values algorithm returns a set $T^{ALG}$ of size $\left|T^{ALG}\right| = \Omega\left(\frac{OPT}{p_{T^{\star}}  \log n \log r}\right)$. Furthermore, $T^{ALG}$ is also feasible in ${\cal F}'$, i.e. 
\begin{gather}\label{eq:t^ALG}
f(T^{ALG}) \geq p_{T^{\star}} \left|T^{ALG}\right| = \Omega\left(\frac{OPT}{\log n \log r}\right).
\end{gather}

\subsubsection*{Guessing $p_{T^{\star}}$}
Finally, we don't actually know $p_{T^{\star}}$, but we can guess it correctly, up to a constant factor, with probability $1/\log r$. 
\ifFULL
We run the classic secretary algorithm over the first $n/2$ items, where we use the value of the singleton $f(\{i\})$ as ``the value of item $i$'': Observe the first $n/4$ items and select none; then take the next item whose value is larger than every item observed so far. 
With constant probability this algorithm selects the item with the largest value, which we denote by $M$. 

Also, with constant probability the algorithm sees the item with the
largest value too early and does not select it. Assume that this is
the case. Since we obtained expected value of $\Omega\left(M\right)$
on the first $n/2$ items we can, without loss of generality, ignore
values less than $M/r$.
In particular, we know that $p_{T^{\star}} \in [M/r, M]$.
Pick $\alpha \in \{M/r,M/(2r),\dots,M/2,M\}$ uniformly at random, and use it instead of $p_{T^{\star}}$ to define ${\cal F}'$.  
With probability $1/\log r$, $p_{T^{\star}} \in [\alpha, 2\alpha]$, in which case the algorithm returns a set $T^{ALG}$ satisfying \eqref{eq:t^ALG}.
\else
See full version for details.
\fi
\end{proof}

\ifFULL
\else
\bibliographystyle{alpha}
\bibliography{prophet2}
\end{document}
\fi

\section{Subadditive Prophet}\label{sec:Prophet}

\begin{defn} [{\sc Monotone Subadditive Downward-Closed Prophet}]
The offline inputs to the problem are: 
\begin{itemize}
\item $n$ sets $U_1, \dots ,U_n$; we denote their union $U \triangleq \bigcup_{i=1}^n U_i$;
\item a monotone non-negative subadditive function $f:\{0,1\}^{U} \rightarrow \mathbb{R}_+$;
\item $n$ distributions $\mathcal{D}_i$ over subset $U_i$; and
\item a feasibility constraint $\cal{F}$ over $[n]$.
\end{itemize}
On the $i$-th time period, the algorithm observes an element $X_i \in U_i$ drawn according to $\mathcal{D}_i$, independently from outcomes and actions on past and future days. The algorithm must decide (immediately and irrevocably) whether to add $i$ and $X_i$ to sets $W$ and $X_W$, respectively, subject to the constraint that $W$ remains feasible in $\cal{F}$. 
The objective is to maximize $f(X_W)$.
\end{defn}  

Let $r$ denote the maximum cardinality of a feasible set $S\in{\cal F}$. 
\begin{thm}
\label{thm:subadditive-prophet}There is a deterministic algorithm
for {\sc Monotone Subadditive Downward-Closed Prophet} that achieves
a competitive ratio of $O\left(\log n\cdot\log^{2}r\right)$.
\end{thm}
The proof of Theorem \ref{thm:subadditive-prophet} consists of three
steps: in Subsection \ref{sub:Subadditive-to-XOS} we reduce  monotone
subadditive valuations over independent items to monotone XOS subadditive
valuations over independent items, with a loss of $O\left(\log r\right)$,
using a lemma of Dobzinski \cite{Dobzinski07-subadditive-vs-XOS}.
Then in Subsection \ref{sub:XOS-to-XOS} we use a standard reduction
from general XOS valuations to XOS with $\left\{ 0,1\right\} $ marginal
contributions, losing another factor of $O\left(\log r\right)$. Finally,
in Subsection \ref{sub:XOS-with-01} we use techniques from \cite{Rub16-downward_closed}
to give an $O\left(\log n\right)$-competitive algorithm for monotone
XOS with $\left\{ 0,1\right\} $ marginal contributions.

\subsection{Subadditive to XOS \label{sub:Subadditive-to-XOS}}

\begin{defn} [{\sc Monotone XOS Downward-Closed Prophet}]
For any set $M$ and  items [n], the offline inputs to the problem are: 
\begin{itemize}
\item $n$ sets $U_1, \dots ,U_n$ of {\em valuations vectors} in $\mathbb{R}_+^M$; we denote their union $U \triangleq \bigcup_{i=1}^n U_i$;
\item a monotone XOS function $\widehat{f}:\{0,1\}^{U} \rightarrow \mathbb{R}_+$ 
\[\widehat{f}(S) \triangleq \max_{m \in M} \sum_{\myvec{u} \in S}u_m \text{ for $S \in \{0,1\}^{U}$};\]
\item $n$ distributions $\mathcal{D}_i$ over subset $U_i$; and
\item a feasibility constraint $\cal{F}$ over $[n]$, which is a collection of subsets of $[n]$.
\end{itemize}
On the $i$-th time period, the algorithm observes a valuations vector $X_i \in U_i$ drawn according to $\mathcal{D}_i$, independently from outcomes and actions on past and future days. The algorithm must decide (immediately and irrevocably) whether to add $i$ and $X_i$ to sets $W$ and $X_W$, respectively, subject to the constraint that $W$ remains feasible in $\cal{F}$. 
The objective is to maximize $\widehat{f}(X_W)$.
\end{defn}  

Below (Proposition \ref{prop:XOS-prophet-general}) we give an $O\left(\log n\cdot\log r\right)$-competitive algorithm for {\sc Monotone XOS Downward-Closed Prophet}.
By Dobzinski's lemma (Lemma \ref{lem:dobzinski}), this implies an $O\left(\log n\cdot\log^2 r\right)$-competitive algorithm for {\sc Monotone Subadditive Downward-Closed Prophet}.

\begin{prop}
\label{prop:XOS-prophet-general}There is a deterministic algorithm
for {\sc Monotone XOS Downward-Closed Prophet} that achieves a competitive
ratio of $O\left(\log n\cdot\log r\right)$.\end{prop}

\subsection{XOS to XOS with $\left\{ 0,1\right\} $ coefficients\label{sub:XOS-to-XOS}}
Below (Proposition \ref{prop:XOS-prophet-01}), we give an $O\left(\log n\right)$-competitive algorithm for {\sc Monotone XOS Downward-Closed Prophet} in the special case where all the vectors $v \in U$ are in $\{0,1\}^M$.
First, let us show why this would imply Proposition \ref{prop:XOS-prophet-general}.

\begin{proof}
[Proof of Proposition \ref{prop:XOS-prophet-general} from Proposition \ref{prop:XOS-prophet-01}]
We recover separately the contributions from ``tail'' events (a single
item taking an exceptionally high value) and the ``core'' contribution
that is spread over many items. Run the better of the following two
algorithms:

\paragraph*{Tail }

Let $OPT$ denote the expected offline optimum value. Whenever we
see a feasible item whose valuations vector $X_{i}$ has value at least
$2OPT$, we select it. For item $i$, let $p_{i}=\Pr\left[X_{i}\geq2OPT\right]$.
We have 
\[
OPT\geq2OPT\cdot\Pr\left[\exists i\colon X_{i}\geq2OPT\right]=2OPT\cdot\left(1-\prod\left(1-p_{i}\right)\right).
\]
Dividing by $OPT$ and rearranging, we get
\[
1/2\leq\prod\left(1-p_{i}\right)\leq e^{-\sum p_{i}},
\]
and thus 
\[
\sum p_{i}\leq\ln2.
\]

Therefore the probability that we want to take an item but can't is
at most $\ln2$, so this algorithm achieves at least a $\left(1-\ln2\right)$-fraction
of the expected contribution from values greater than $2OPT$. 

\paragraph*{Core }

Observe that we can safely ignore values less than $OPT/2r$, as those
can contribute a total of at most $OPT/2$. Partition all remaining
values into $2+\log r$ intervals $\left[OPT/2r,OPT/r\right],\allowbreak\dots,\mbox{\allowbreak}\left[OPT,2OPT\right]$.
The expected contribution  from the values in each interval is $\Omega\left(1/\log r\right)$-fraction
of the expected offline optimum without values greater than $2OPT$.
Pick the interval with the largest expected contribution, round down
all the values in this interval, and run the algorithm guaranteed
by Proposition \ref{prop:XOS-prophet-01}. This achieves an $\Omega\left(\frac{1}{\log n\cdot\log r}\right)$-fraction
of the expected contribution from values less than or equal to $2OPT$.
\end{proof}

\subsection{XOS with $\left\{ 0,1\right\} $ coefficients\label{sub:XOS-with-01}}
\begin{prop}
\label{prop:XOS-prophet-01}When the $X_{i}$'s take values in $\left\{ 0,1\right\} ^{M}$,
there is a deterministic algorithm for {\sc Monotone XOS Downward-Closed Prophet}
that achieves a competitive ratio of $O\left(\log n\right)$.
\end{prop}

\subsubsection{A dynamic potential function}\label{sec:dynpotent}

At each iteration, the algorithm maintains a target value $\tau$
and a target probability $\pi$, where $\pi$ is the probability (over future
realizations) that the current restricted prophet beats $\tau$. We
say that an outcome (i.e. a pair of item and valuations vector) is {\em good}
if selecting it does not decrease the probability of beating the target
value by a factor greater than $n^{2}$, and {\em bad} otherwise.
Notice that all the bad items together contribute at most a $\left(1/n\right)$-fraction
of the probability of beating $\tau$. A key ingredient is that $\tau$
is updated dynamically. If the probability of observing a good outcome
is too low (less than $1/4$), we deduct $1$ from $\tau$. We show
(Lemma \ref{lem:main}) that this increases $\pi$ by a factor of
at least $2$. Since $\pi$ decreases by at most an $n^{2}$ factor when we select
an item, and increases by a factor of $2$ whenever we deduct $1$
from $\tau$: we balance $2\log n$ deductions for every item the
algorithm selects, and this gives the $O\left(\log n\right)$ competitive
ratio. 

So far our algorithm is roughly as follows: set a target value $\tau$;
whenever the probability $\pi$ of reaching the target $\tau$ drops
below $1/4$, decrease $\tau$; if $\pi>1/4$, sit and wait for a
good outcome - one will arrive with probability at least $1/4$ (we actually do this with $\Pr[A]$ instead of $\pi$, where $A$ is a closely related event). There
is one more subtlety: what should the algorithm do if no good outcomes
arrive? In other words, what if the probability of observing a good
outcome is neither very low nor very close to $1$, say $1/2$ or
even $1-\frac{1}{\log n}$? On one hand, we can't decrease $\tau$
again, because we are no longer guaranteed a significant increase
in $\pi$; on the other hand, after, say $\Theta\left(\log^{2}n\right)$
iterations, we still have a high probability of having an iteration
where none of the good outcomes arrive. (If no good outcomes are coming,
we don't want the algorithm to wait forever...) Fortunately, there
is a simple solution: the algorithm waits for the last item with a
good outcome in its support; if, against the odds, no good outcomes
have yet been observed, the algorithm ``hallucinates'' that this last
item has a good valuations vector, and selects it. In expectation, at most
a constant fraction of the items we select are ``hallucinated'', so
the competitive ratio is still $O\left(\log n\right)$.

\subsubsection{Notation}

We let $OPT$ denote the expected (offline) optimum. $W$ is the set
of items selected so far ($W$ for ``Wins''), and $\ell_{W}\triangleq\max\left\{ i\in W\right\} $
is the index of the last selected item.

Let ${\cal F}$ denote the family of all feasible subsets of $[n]$. For any $T\subseteq\left[n\right]$,
let ${\cal F}_{T}$ denote the family of feasible sets whose intersection
with $\left\{ 1,\dots,\max\left( T\right) \right\} $ is exactly
$T$. 

Let $X_{i}=\left(X_{i}^{m}\right)_{m\in M}\in\left\{ 0,1\right\} ^{M}$
denote the random vector drawn for the $i$-th item. We use $z_{i}^ {}$
to refer to the observed realization of $X_{i}^ {}$. Our algorithm will maintain a subset $M' \subseteq M$. We let 
\[
V_{M'}\left({\cal F},X_{\left[n\right]}\right)\triangleq\max_{S\in{\cal F}}\max_{m\in M'}\sum_{i\in S}(X_{i})_{m}
\]
 denote the value of optimum offline solution (note that this is also
a random variable). 

Let $\tau=\tau\left(W\right)$ be the current target value, and $\pi=\pi\left(\tau,W\right)$
denotes the current target probability:
\[
\pi\left(\tau,W\right)\triangleq\Pr\left[V_{M'}\left({\cal F}_{W},X_{\left[n\right]}^ {}\right)>\tau\mid X_{\left[\ell_{W}\right]}^ {}=z_{\left[\ell_{W}\right]}^ {}\right].
\]
For each $y_{j}\in\supp\left(X_{j}^ {}\right)$, we define $\pi^{j,y_{j}}=\pi^{j,y_{j}}\left(\tau,W\right)$
to be the probability of reaching $\tau$, given that:
\begin{itemize}
\item $z_{j}=y_{j}$,
\item $j$ is the next item we select, and 
\item item $j$ actually contributes $1$ to the offline optimum. 
\end{itemize}
Formally, 
\[
\pi^{j,y_{j}}\left(\tau,W\right)\triangleq\Pr\left[V_{M'\cap y_{j}}\left({\cal F}_{W+j},X_{\left[n\right]}\right)>\tau\mid X_{\left[\ell_{W}\right]\cup\left[j\right]}=\left(z_{\left[\ell_{W}\right]},y_{j}\right)\right],
\]
where we slightly abuse notation and also use $y_{j}$ to denote the
set of $m\in M$ such that $y_{j}^{m}=1$. 

We say that a future outcome $\left(j,y_{j}\right)$ is {\em good}
if $\pi^{j,y_{j}}\geq n^{-2}\cdot\pi$ and $j$ is feasible (and otherwise
it is {\em bad}), and let $G=\left\{ \mbox{good\,}\left(j,y_{j}\right)\right\} $
denote the set of good future outcomes. Finally, 
\[A \triangleq A\left(\pi,\tau,W\right), \]
\text{ is the event that at least one of the good outcomes occur.} 

\subsubsection{Updated proof plan and the algorithm}
The  idea is to always maintain a threshold $\tau$ such that probability of one of the good outcomes to occur is large, i.e. $\Pr[A]$ is at least a constant $\frac14$. The way we do this is by showing  in Claim~\ref{claim:Api} that at any time during the execution of the algorithm, conditioned on what all has happened till now, the probability $\pi$ that the offline algorithm achieves the threshold $\tau$ gives a lower bound on $\Pr[A]$. 
Hence, whenever $\Pr[A]$ goes below $\frac14$, we  decrease the threshold $\tau$, which increases $\pi$ due to Lemma~\ref{lem:main} and, indirectly, increases $\Pr[A]$  by Claim~\ref{claim:Api}.

Initialize $\tau\leftarrow OPT/2$, $M'\leftarrow M$, and $W\leftarrow\emptyset$. 
Lemma~\ref{lem:concentration} uses  a concentration bound due to Ledoux to show that in the beginning $\tau = OPT/2$ satisfies $\pi > \frac14$.

After each update to $W$, decrease $\tau$ until $\Pr\left[A\right]\geq1/4$,
or until $\left|W\right|>\tau$. When $\Pr\left[A\right]\geq1/4$,
reveal the values of items until observing a good outcome. When we
observe a good outcome $z_{j}$, add $j$ to $W$ and restrict $M'$
to its intersection with $z_{j}$. Since we restrict $M$ to $M'$, this gives us that at any time \[V_{M'}\left({\cal F},X_{W}\right) = |W|. \]
If we reach the last item with
good outcomes in its support, and none of the good outcomes realize, add
this last item to $G$ and subtract $1$ from $\tau$ 
 (without modifying $M'$). See also pseudocode in Algorithm \ref{alg:prophet}.

\begin{algorithm}
\protect\caption{\label{alg:prophet}Prophet}

\begin{enumerate}
\item $\tau\leftarrow\frac{OPT}{2}$; $M'\leftarrow M$; $W\leftarrow\emptyset$
\item while $\tau>\left|W\right|$:

\begin{enumerate}
\item $\pi\leftarrow\Pr\left[V_{M'}\left({\cal F}_{W},X_{\left[n\right]}\right)>\tau\mid X_{\left[\ell_{W}\right]}=z_{\left[\ell_{W}\right]}\right]$

{\color{gray-comment} \# $\pi$ is the probability that, given the
history, the offline optimum can still beat $\tau$.}

\item $G\leftarrow\left\{ \left(j,y_{j}\right):j>\ell_{W}\,\mbox{AND\,}\pi^{j,y_{j}}\geq n^{-2}\cdot\pi\right\} \cap\left(\bigcup_{S\in{\cal F}_{W}}S\right)$

{\color{gray-comment} \# $G$ is the set of good and feasible outcomes.}

\item if $\Pr\left[A\right]\geq1/4$

{\color{gray-comment} \# A good outcome is likely occur.}
\begin{enumerate}
\item $j^{*}\leftarrow\min\left\{ j\in G\colon\left(j,z_{j}\right)\in G\right\} $

{\color{gray-comment} \# Wait for a good and feasible outcome.}

\item if $j^{*}=\infty$

{\color{gray-comment} \# No good outcomes.}
\begin{enumerate}
\item $j^{*}\leftarrow\max G$

{\color{gray-comment} \# Select the last potentially good item.}

\item $\tau\leftarrow\tau-1$

{\color{gray-comment} \# Adjust the target value to account for select
an item with value $0$}

\end{enumerate}
\item else

{\color{gray-comment} \# $j^{*}$ is actually a good item.}
\begin{enumerate}
\item $M'\leftarrow M'\cap z_{j}$
\end{enumerate}
\item $W\leftarrow W\cup\left\{ j^{*}\right\} $
\end{enumerate}
\item else

\begin{enumerate}
\item $\tau\leftarrow\tau-1$\label{enu:deduct-from-tau}

{\color{gray-comment} \# decrease target value $\tau$ until $\Pr\left[A\right]\geq1/4.$}\end{enumerate}
\end{enumerate}
\end{enumerate}
\end{algorithm}

We first claim that $\pi$ gives us a lower bound on $\Pr[A]$ because most of the mass in $\pi$ comes from good outcomes.
\begin{claim}\label{claim:Api}
At any point during the run of the algorithm, 
\[ \Pr[A\left(\pi,\tau,W\right)] \geq \left(1- \frac1n \right) \pi(W,\tau).
\]
\end{claim}
\begin{proof}
For each $\left(j,y_{j}\right)\notin G$, we have,
by definition of $G$, 
\[
\pi^{j,y_{j}}\left(W,\tau\right)<n^{-2}\cdot\pi\left(W,\tau\right).
\]
Summing over all $\left(j,y_{j}\right)\notin G$,

\begin{flalign*}
\sum_{j}\sum_{y_{j}:\left(j,y_{j}\right)\notin G}\Pr\left[y_{j}\right]\cdot\pi^{j,y_{j}}\left(W,\tau\right) & \leq\sum_{j}\sum_{y_{j}:\left(j,y_{j}\right)\notin G}\Pr\left[y_{j}\right]\cdot\left(n^{-2}\cdot\pi\left(W,\tau\right)\right)\\
 & \leq\sum_{j}n^{-2}\cdot\pi\left(W,\tau\right)\\
 & \leq n^{-1}\cdot\pi\left(W,\tau\right).
\end{flalign*}
Thus, most of $\pi$ comes from good $\left(j,y_{j}\right)$'s:
\begin{equation}
\Pr[A] = \sum_{j}\sum_{y_{j}:\left(j,y_{j}\right)\in G}\Pr\left[y_{j}\right]\cdot\pi^{j,y_{j}}\left(W,\tau\right)\geq\left(1-1/n\right)\pi\left(W,\tau\right).\label{eq:most-pi-from-good}
\end{equation}
\end{proof}

\subsubsection{Concentration for the beginning} \label{sec:concsubaddproph}
\begin{thm}
{\cite[Theorem 2.4]{Ledoux1997}}\label{thm:ledoux} There exists
some constant $K>0$ such that the following holds. Let $Y_{i}$'s
be independent (but not necessarily identical) random variables in
some space $S$; let ${\cal C}$ be a countable class of measurable
functions $f\colon S\rightarrow\left[0,1\right]$; and let $Z=\sup_{f\in{\cal C}}\sum_{i=1}^{n}f\left(Y_{i}\right)$.
Then, 
\[
\Pr\left[Z\geq\E\left[Z\right]+t\right]\le\exp\left(-\frac{t}{K}\cdot\log\left(1+\frac{t}{\E\left[Z\right]}\right)\right).
\]

\end{thm}
To make the connection to our setting, let $Y_{i}$ be the vector
in $\left[0,1\right]^{{\cal F}\times M}$ whose $\left(S,m\right)$-th
coordinate is $X_{i}^{m}$ if $i\in S$, and $0$ otherwise. Let $f_{S,m}\left(Y_{i}\right)\triangleq\left[Y_{i}\right]_{S,m}$,
so $\sum_{i=1}^{n}f_{S,m}\left(Y_{i}\right)$ is simply the value
of the set $S$ under the $m$-th summation in the XOS representation
of the valuation function. Let ${\cal C}\triangleq\left\{ f_{S}\right\} _{S\in{\cal F}}$.
The above concentration inequality can now be written as
\begin{equation}
\Pr\left[V\left({\cal F},X_{\left[n\right]}\right)\geq OPT+t\right]\le\exp\left(-\frac{t}{K}\cdot\log\left(1+\frac{t}{OPT}\right)\right).\label{eq:V-concentrates}
\end{equation}

\begin{lem}
\label{lem:concentration}Assume $OPT\geq\Omega\left(\log n\right)$.
Then,
\[
\Pr\left[V\left({\cal F},X_{\left[n\right]}\right)\geq\frac{OPT}{2}\right]>1/4.
\]
\end{lem}
\begin{proof}
We have, 
\begin{align}
OPT & =\int_{-OPT}^{\infty}\Pr\left[V\left({\cal F},X_{\left[n\right]}\right)\geq OPT+t\right]dt,\label{eq:OPT-integral}
\end{align}
which can be decomposed as to integrals over $\left[-OPT,-OPT/2\right]$,
$\left[-OPT/2,OPT\right]$, and $\left[OPT,\infty\right]$. 

The first two integrals can be easily bounded as
\[
\int_{-OPT}^{-OPT/2}\Pr\left[V\left({\cal F},X_{\left[n\right]}\right)\geq OPT+t\right]dt\leq\int_{-OPT}^{-OPT/2}1\cdot dt\leq\frac{OPT}{2}
\]
and
\begin{eqnarray*}
\int_{-OPT/2}^{OPT}\Pr\left[V\left({\cal F},X_{\left[n\right]}\right)\geq OPT+t\right]dt & \leq & \int_{-OPT/2}^{OPT}\Pr\left[V\left({\cal F},X_{\left[n\right]}\right)\geq OPT/2\right]dt\\
 & \leq & \frac{3OPT}{2}\cdot\Pr\left[V\left({\cal F},X_{\left[n\right]}\right)>\frac{OPT}{2}\right].
\end{eqnarray*}

For the third integral we use the concentration bound (\ref{eq:V-concentrates}):
\begin{eqnarray*}
\int_{OPT}^{\infty}\Pr\left[V\left({\cal F},X_{\left[n\right]}\right)\geq OPT+t\right]dt & \leq & \int_{OPT}^{\infty}\exp\left(-\frac{t}{K}\cdot\log\left(1+\frac{t}{OPT}\right)\right)dt\\
 & \leq & \int_{OPT}^{\infty}\exp\left(-\frac{t}{K}\right)dt\\
 & = & \left[Ke^{-t/K}\right]_{OPT}^{\infty}=K\cdot e^{-OPT/K},
\end{eqnarray*}
which is negligible since $OPT=\omega\left(1\right)$. 

Plugging into (\ref{eq:OPT-integral}), we have:
\[
OPT\leq\frac{OPT}{2}+\frac{3OPT}{2}\cdot\Pr\left[V\left({\cal F},X_{\left[n\right]}\right)>\frac{OPT}{2}\right]+o\left(1\right),
\]
and after rearranging we get
\[
\Pr\left[V\left({\cal F},X_{\left[n\right]}\right)>\frac{OPT}{2}\right]\geq1/3-o\left(1\right).
\]

\end{proof}

\subsubsection{Main lemma}

\begin{lem}
\label{lem:main}At any point during the run of the algorithm, if
$\Pr\left[A\right]\leq1/4$, then subtracting $1$ from $\tau$ doubles
$\pi$; i.e. 
\[
\pi\left(W,\tau-1\right)\geq2\pi\left(W,\tau\right).
\]
\end{lem}

\begin{proof}[Proof of Lemma~\ref{lem:main}]
Consider the event that the optimum solution (conditioned on the items
$W$ we already selected and the realizations $z_{\left[\ell_{W}\right]}$
we have already seen) reaches $\tau$. We can write it as a union
of disjoint events, depending on the next item $j>\ell_{W}$ that
is part of the optimum solution, and its possible realizations $y_{j}$:
\[
\pi\left(W,\tau\right)=\sum_{j}\sum_{y_{j}}\Pr\left[y_{j}\right]\cdot\underbrace{\Pr\left[V_{M'\cap y_{j}}\left({\cal F}_{W\cup\left\{ j\right\} },X_{\left[n\right]}\right)>\tau\mid X_{\left[\ell_{W}\right]\cup\left[j\right]}=\left(z_{\left[\ell_{W}\right]},y_{j}\right)\right]}_{\pi^{j,y_{j}}\left(W,\tau\right)}.
\]
We break the RHS into the sum over $\left(j,y_{j}\right)$'s that
are good and the sum over those that are bad. Now, Claim~\ref{claim:Api} gives 

\begin{equation}
\sum_{j}\sum_{y_{j}:\left(j,y_{j}\right)\in G}\Pr\left[y_{j}\right]\cdot\pi^{j,y_{j}}\left(W,\tau\right)\geq\left(1-1/n\right)\pi\left(W,\tau\right).\label{eq:most-pi-from-good}
\end{equation}

Since $y_{j}\in\left\{ 0,1\right\} ^{M}$, each item can contribute
at most $1$ to the offline optimum. Therefore:

\[
\underbrace{\Pr\left[V_{M'\cap y_{j}}\left({\cal F}_{W\cup\left\{ j\right\} },X_{\left[n\right]}\right)>\tau\mid X_{\left[\ell_{W}\right]\cup\left[j\right]}=\left(z_{\left[\ell_{W}\right]},y_{j}\right)\right]}_{\pi^{j,y_{j}}=\pi^{j,y_{j}}\left(\tau,W\right)}\leq\pi^{j,0}\left(W,\tau-1\right)
\]
Plugging into (\ref{eq:most-pi-from-good}), we have
\begin{eqnarray}
\left(1-1/n\right)\pi\left(W,\tau\right) & \leq & \sum_{j}\sum_{y_{j}:\left(j,y_{j}\right)\in G}\Pr\left[y_{j}\right]\cdot\pi^{j,0}\left(W,\tau-1\right)\nonumber \\
 & \leq & \sum_{j}\Pr\left[\left(j,y_{j}\right)\in G\right]\cdot\pi^{j,0}\left(W,\tau-1\right)\nonumber \\
 & \leq & \left(\sum_{j}\Pr\left[\left(j,y_{j}\right)\in G\right]\right)\cdot\pi\left(W,\tau-1\right),\label{eq:sum-of-Pr=00005Bj,y=00005D}
\end{eqnarray}
where the second inequality follows because $\pi^{j,0}\left(W,\tau-1\right)$
doesn't depend on $y_{j}$, and the third because conditioning on
the $j$-th item being $0$ can only decrease the probability of reaching
$\tau-1$. 

Recall that $A$ is the union of all the events $\left(j,y_{j}\right)\in G$.
Therefore,
\[
\Pr\left[A\right]\geq\sum_{j}\Pr\left[\left(j,y_{j}\right)\in G\right]\left(1-\Pr\left[A\right]\right)
\]
Plugging in $\Pr\left[A\right]<1/4$, we get that $\sum_{j}\Pr\left[\left(j,y_{j}\right)\in G\right]<1/3$.
Plugging into (\ref{eq:sum-of-Pr=00005Bj,y=00005D}) and rearranging,
we get
\[
\pi\left(W,\tau-1\right)\geq\frac{3n}{n-1}\pi\left(W,\tau\right).
\]
 
\end{proof}

\subsubsection{Putting it all together}
\begin{lem}
\label{lem:potential}At any point during the run of the algorithm,
\[
\tau\geq\frac{OPT}{2}-\left(2\log n+1\right)\cdot\left|W\right|-2
\]
\end{lem}
\begin{proof}
We prove by induction that at any point during the run of the algorithm,
\begin{equation}
\log\pi\geq-2-\left(2\log n+1\right)\cdot\left|W\right|+\left(\frac{OPT}{2}-\tau\right).\label{eq:induction}
\end{equation}
After initialization, $\log\pi\geq-2$ by Lemma \ref{lem:concentration}.
By definition of $G$, whenever we add an item to $W$, we decrease
$\log\pi$ by at most $2\log n$ - hence the $2\log n\cdot\left|W\right|$
term. Notice that when the algorithm ``hallucinates'' a $1$, we also
decrease $\tau$ by $1$ to correct for the hallucination - at any
point during the run of the algorithm, this has happened at most $\left|W\right|$
times. Recall that we may also decrease $\tau$ in the last line of
Algorithm \ref{alg:prophet} (in order to increase $\pi$); whenever
we do this, $\tau$ decreases by $1$, but $\pi$ doubles (by Lemma
\ref{lem:main}), so $\log\pi$ increases by $1$, and Inequality
(\ref{eq:induction}) is preserved. 

Finally, since $\pi$ is a probability, we always maintain $\log\pi\leq0$.
\end{proof}
We are now ready to complete the proof of Theorem \ref{thm:subadditive-prophet}.
\begin{proof}
[Proof of Proposition~\ref{prop:XOS-prophet-01}]The algorithm always terminates
after at most $O\left(OPT\right)$ decreases to the value of $\tau$.
By Lemma \ref{lem:potential}, when the algorithm terminates, we have
$\left|W\right|\geq\tau\geq\frac{OPT}{2}-\left(2\log n+1\right)\cdot\left|W\right|-2$,
and therefore in particular $\left|W\right|\geq\frac{OPT-4}{4\log n+4}$. 

Finally, recall that sometimes the algorithm ``hallucinates'' good
realizations, i.e. for some items $i\in W$ that we select, $X_{i}=0$.
However, each time we add an item, the probability that we add a zero-value
item is at most $3/4$ (by the condition $\Pr\left[A\right]>1/4$).
Therefore in expectation the value of the algorithm is at least $\left|W\right|/4$.
\end{proof}

\subsection*{Acknowledgements}
The second author thanks Anupam Gupta for introducing him to  submodular optimization.

\ifFULL
\bibliographystyle{alpha}
\bibliography{prophet2}

\appendix

\section{Missing Proofs}\label{sec:missingProofs}
\noindent \textbf{Lemma~\ref{lem:low-high}}.
Consider any submodular function $f$  and $L,H \in [0,1]$. Let $S^*$ be a set that maximizes $f$, and let $\x \in [0,1]^n$ such that for all $i \in [n]$, $L \leq x_i \leq H$.
Then, \[F(\x) \geq (1-H)(1-L) f(\emptyset) + (1-H)L\cdot f(S^*)\]

\begin{proof}
Assume, wlog, that $S^* = \{1,\dots,k\}$. By submodularity, at each step after we add another element, the potential marginal gain of all other elements decreases. In particular, if we add the elements in $S^*$ in any order, they all have non-negative marginal contribution (since each has a non-negative marginal contribution when added last). 

Let $\x_{\leq i}$ denote the restriction of $\x$ to $[i]$. We first show by induction that for every $i \leq k$, $F(\x_{\leq i}) \geq (1-L)f(\emptyset) + L\cdot f([i])$. Denote $S_i \triangleq S \cap [i]$.
We have that $F(\x_{\leq i})$ is at least:
\begin{align*}   \E_{S \sim \x} \Big[f(S_i)\Big]  & \geq \E_{S \sim \x} \Big[f(S_{i-1}) + f(S_i \cup [i-1]) - f([i-1]) \Big] && \text{(Submodularity)}\\
& = F(\x_{\leq i-1}) + \E_{S \sim \x} \Big[f(S_i \cup [i-1]) - f([i-1]) \Big] \\
& \geq  F(\x_{\leq i-1}) + x_i \Big(f([i]) - f([i-1])\Big) && \text{(Submodularity)} \\
& \geq  F(\x_{\leq i-1}) + L \Big(f([i]) - f([i-1])\Big)  && (f([i]) - f([i-1]) \geq 0).
\end{align*}
Finally by the induction hypothesis, $F(\x_{\leq i-1}) \geq (1-L)f(\emptyset) + L\cdot f([i-1])$.
In particular, we now have that 
\[F(\x_{\leq k}) \geq (1-L) f(\emptyset) + L\cdot f(S^*).\]

It is left to argue that the rest of the elements do not hurt the value too much.  
Consider any $S_i \subseteq S^*$, and let $B^* = B^*(S_k) = \{k+1, \dots ,\ell \}$ be the worst set that we could add to $S_k$, i.e. the $B$ that minimizes $f(S_k \cup B)$.
Let $B_j \triangleq \{k+1, \dots ,\ell \} \subseteq B^*$, and let $T_j$ denote the intersection of $B_j$ with a set $T \subseteq T^*$ sampled according to $\x$.
Now consider two options for adding elements from $T^*$ to $S_k$: 
\begin{enumerate}
\item deterministically, or 
\item independently at random with probabilities sampled according to $\x$. 
\end{enumerate}
Since $f$ is non-negative, we have that even when we add all the bad elements deterministically, 
\begin{gather}\label{eq:from-positivity}
 \sum f(S_k \cup B_j) - f(S_k \cup B_{j-1})   = f(S_k \cup B^*) - f(S_k) \geq - f(S_k).
\end{gather}
When we add the elements at random, we have (by submodularity) that the marginal contribution of each bad element can only increase compared to its contribution in the first case. 
Therefore,
\begin{align*}
 \E_T \left[f(S_k \cup T)  -f(S_k) \right] & = \sum \E_T\left[ f(S_k \cup T_j) - f(S_k \cup T_{j-1})\right] \\
& \geq \sum x_j \left(f(S_k \cup B_j) - f(S_k \cup B_{j-1})\right)  && \text{(Submodularity)}\\
& \geq \sum H \left(f(S_k \cup B_j) - f(S_k \cup B_{j-1})\right)  && (f(S_k \cup B_j) - f(S_k \cup B_{j-1}) \leq 0) \\
& \geq - H f(S_k)  && \text{(Inequality \eqref{eq:from-positivity})}.
\end{align*}

 So far we have $\E_T \left[f(S_k \cup T)\right] \geq (1-H) f(S_k)$. Finally, the marginal contribution of the remaining elments $\{\ell+1,\dots, n\}$ is non-negative by submodularity (if it were negative, we could get a worse set $B'$).
Therefore, for every $S_k$, adding the rest of the elements can decrease the value by at most a factor of $1-H$.
\end{proof}

\end{document}
\fi